\def\eqref#1{equation~\ref{#1}}
\def\1{\bm{1}}
\def\vzero{{\bm{0}}}
\def\vf{{\bm{f}}}
\def\vu{{\bm{u}}}
\def\vv{{\bm{v}}}
\def\vw{{\bm{w}}}
\def\vy{{\bm{y}}}
\def\evf{{f}}
\def\evu{{u}}
\def\evv{{v}}
\def\evw{{w}}
\def\mA{{\bm{A}}}
\def\mC{{\bm{C}}}
\def\mI{{\bm{I}}}
\def\mM{{\bm{M}}}
\def\mR{{\bm{R}}}
\def\mS{{\bm{S}}}
\DeclareMathAlphabet{\mathsfit}{\encodingdefault}{\sfdefault}{m}{sl}
\SetMathAlphabet{\mathsfit}{bold}{\encodingdefault}{\sfdefault}{bx}{n}
\def\emC{{C}}
\def\emR{{R}}
\def\emS{{S}}
\newcommand{\eat}[1]{}
\newcommand{\mnote}[2][FIXME]{\textcolor{red}{[\textbf{#1:}} {\color{blue} {#2}}\textcolor{red}{\textbf{]}}}
\newcommand{\stitle}[1]{\vspace{0.1cm}\noindent\textbf{#1}}
\newcommand{\method}{Att\-Rank\xspace}
\newcommand{\ourvec}{attention\xspace}
\newtheorem{theorem}{Theorem}
\newtheorem{problem}{Problem}
\begin{document}

\title{Ranking Papers by their Short-Term Scientific Impact$^*$\thanks{$^*$A short version of this paper appears in \cite{kanellos2021ranking}.}}

\author{\IEEEauthorblockN{Ilias Kanellos}
\IEEEauthorblockA{\textit{Athena R.C.}\\
Athens, Greece}
\IEEEauthorblockA{ilias.kanellos@athenarc.gr}\\
\and
\IEEEauthorblockN{Thanasis Vergoulis}
\IEEEauthorblockA{\textit{Athena R.C.}\\
Athens, Greece}
\IEEEauthorblockA{vergoulis@athenarc.gr}
\and
\IEEEauthorblockN{Dimitris Sacharidis}
\IEEEauthorblockA{\textit{ULB}\\
Brussels, Belgium}
\IEEEauthorblockA{dimitris.sacharidis@ulb.be}
\and
\IEEEauthorblockN{Theodore Dalamagas}
\IEEEauthorblockA{\textit{Athena R.C.}\\
Athens, Greece}
\IEEEauthorblockA{dalamag@athenarc.gr}
\and
\IEEEauthorblockN{Yannis Vassiliou}
\IEEEauthorblockA{\textit{NTUA}\\
Athens, Greece}
\IEEEauthorblockA{yv@cs.ntua.gr}
}

\maketitle

\begin{abstract}
The constantly increasing rate at which scientific papers are published makes it difficult for researchers to identify papers that currently impact the research field of their interest. Hence, approaches to effectively identify papers of high impact have attracted great
attention in the past. In this work, we present a method that seeks to rank papers based on their estimated short-term impact, as measured by the number of citations received in the near future. Similar to previous work, our method models a researcher as she explores the paper citation network. The key aspect is that we incorporate
an attention-based mechanism, akin to a time-restricted version of preferential attachment,
to explicitly capture a researcher's preference to read papers which received a lot of attention recently. A detailed experimental evaluation on four real citation datasets across disciplines, shows
that our approach is more effective than previous work in ranking papers based on their short-term impact.
\end{abstract}

\begin{IEEEkeywords}
citation networks, paper ranking, data mining
\end{IEEEkeywords}

\section{Introduction}

Quantifying the importance of scientific publications, colloquially called \emph{papers}, is an important research problem with various applications. For example, a student that wants to familiarize herself with a research area, may look for seminal papers in the field. A hiring committee may assess an applicant based on the 
aggregate impact of his publication record. 
As the number of papers published 
grows at an increasing 
rate~\cite{larsen2010rate,bornmann2015growth}, 
discerning the important papers, especially among the recent publications, becomes a hard task.

Conventionally, the importance of a paper 
is bestowed upon itself by its peers, the
subsequent papers that continue the line 
of research and acknowledge its 
contribution by citing it. Therefore, 
the scientific impact of a paper depends on the network of citations.
In this work, we focus on the 
\emph{short-term impact} (STI) of a 
paper, quantified by 
the number of citations it 
acquires
in the near future (
referred to as ``new citations''
\cite{walker2007ranking}, or 
``future citation counts''
\cite{ghosh2011time}). 
Specifically, 
we address the research problem of
ranking papers via their expected 
short-term impact. 

Existing work typically assigns to each paper a proxy score estimating its expected short-term impact. These scores are determined by a stochastic process, akin to PageRank \cite{page1999pagerank}, modelling the 
impact flow in the citation network. The important concern here is to account for the \emph{age bias} inherent in citation networks
~\cite{chen2007finding,hwang2010yet,yu2005adding,liao2017ranking}: as papers can only cite past work, recent publications are at a disadvantage having 
less opportunity to accumulate citations.
A popular way to address this is by
 introducing time-awareness into the stochastic process, by favoring either recent papers or recent citations~\cite{walker2007ranking,sayyadi2009futurerank,ghosh2011time}. Nonetheless, it has been shown~\cite{KVSDV2019} that these methods still leave enough space for further improvements.

In this paper, we argue that there is an additional, previously
unexplored mechanism that governs
where future citations end up. We posit that recent citations strongly influence the short-term impact, in that the level of \emph{attention} papers currently enjoy will not
change significantly in the very 
near future. We investigate this hypothesis and find that 
it holds to a certain degree across different citation networks. 
Hence, we introduce an attention-based mechanism, reminiscent of a 
time-restricted version of preferential
attachment~\cite{barabasi2014network}, 
that models the fact that recently 
cited papers 
continue getting cited in the short-term.

The proposed paper ranking method, called \method, describes an iterative process simulating a researcher 
reading existing literature.
At each step in the process, the researcher has studied some 
paper and decides what to read next among three options: 
(a) pick a reference from the current paper, 
(b) pick a recent paper, and 
(c) pick a currently popular paper. 
The first option models the impact flow of
impact from citations, 
the second option mitigates age bias, while the third option 
models the aforementioned attention-based mechanism of network
growth. 
We can guarantee that, if the probabilities are properly configured, this process will always converge (see Theorem~\ref{th:convergence}).
This converged \method score of each paper acts as a proxy 
to its unknown short-term impact. Hence, 
to estimate their STI ranking
we rank papers in decreasing order of their \method score.

To evaluate \method's effectiveness in identifying papers 
with high short-term impact, we perform an extensive experimental
evaluation, 
on four citation networks 
from various scientific disciplines. 
We measure effectiveness as the ranking accuracy with respect to 
the ground truth STI ranking. We investigate the 
importance of the attention mechanism in 
achieving high effectiveness.
We also compare \method against several 
state-of-the-art methods, which are carefully tuned for each experimental setting. 
Our findings indicate 
 that across almost all settings, 
\method 
outperforms prior work.

The contributions of this work are as follows:
\begin{itemize}
    \item We study the problem of ranking papers by their short-term impact (STI), and observe that among the top ranking papers we not only find papers published recently, but also papers that have just recently become popular.
    
    \item We propose a popularity-based model of growth for the citation network that seeks to explain the aforementioned observation. We then introduce paper ranking method, called \method, that materializes this model.

    \item We perform an extensive experimental evaluation that highlights the importance of the popularity-based growth mechanism in achieving superior performance against state-of-the-art methods. Specifically, we find that \method achieves higher positive correlations to 
    STI rankings, by up to $0.077$ units\eat{\color{cyan}
    ($ 4.81\% $)} 
    compared to its competitors, and higher nDCG values, by up to 
    $ 0.098 $ units.
    
    \item \method's implementation is 
    scalable and can be executed on 
    very large citation networks. It 
    will be made publicly available 
    under the GNU/GPL license.
    
\end{itemize}

\noindent \textbf{Outline.} The remaining of this paper is structured as follows. In Section~\ref{sec-back}, we introduce the problem and related concepts.
In Section~\ref{sec-method}, we investigate the attention-based mechanism and introduce our method, \method. Then, in
Section~\ref{sec-exp}, we 
experimentally evaluate \method's 
effectiveness in comparison to the
state-of-the-art paper ranking methods.
In Section~\ref{sec-related},
we discuss related work. We conclude our
contribution in Section~\ref{sec-concl}.

\section{Background}
\label{sec-back}

\stitle{Citation Network.}
We represent a collection of
papers as a directed graph,
which we call the
\emph{citation network}. Each node in
this graph corresponds to a paper, while each
directed edge corresponds to a citation.

A citation network can be represented
by its \emph{citation matrix}
$ \mC $, where entry $ \emC_{i,j} = 1 $,
iff paper $j $ cites paper $ i $,
and $ \emC_{i,j} = 0 $, otherwise.
We follow the convention that paper ids indicate the order in which they are published; i.e., paper $i$ was published before $j$ iff $i<j$.

To distinguish between different states of the citation network as it evolves over time, we use the \emph{superscript parenthesis} notation $(t)$ to refer to a state where only papers with id up to $t$ have been published. For example, $\mC^{(t)}$ is the $t \times t$ citation matrix for the first $t$ papers. Because of how the citation matrix evolves, $\mC^{(t)}$ is a submatrix of $\mC^{(t')}$ when $t<t'$.
We use the \emph{subscript bracket} notation $[n]$ to refer to a submatrix containing the first $n$ rows and columns, or to a subvector containing the first $n$ entries; $[-n]$ denotes the last $n$ rows/columns/entries.
For example, the previous observation can be expressed as $\mC_{[n]}^{(t)} = \mC_{[n]}^{(t')}$ for any $n \le t < t'$.

\stitle{PageRank.}
PageRank \cite{page1999pagerank} measures the importance of a node in a network, by defining a random walk with jumps process.
In the context of citation networks, the process simulates 
a ``random researcher'', who starts their work by reading a paper. Then, with probability $ \alpha$, they pick another paper to read from the reference list, or, with
probability $1-\alpha$, choose any other paper in the network at random.

Given a citation matrix $\mC$, define the stochastic matrix $\mS$ as follows.
Let $ k_i $ denote the number of papers referenced by $ i $. Then,
$\emS_{i,j} = \frac{1}{k_j} $,
iff paper
$ j $ cites paper $ i $,
$ \emS_{i,j} = 0 $, iff $ j $ does not
cite $i$ but cites at least
one other paper,
and $ \emS_{i,j} = \frac{1}{n} $,
iff paper $ p_j $ cites no paper (i.e., is a dangling node), where $n$ is the number of papers.

Let $\vu$ denote the \emph{teleport vector},\footnote{All vectors are column vectors.} such that $|\vu| \equiv \sum_i \evu_i = 1$ and $\forall i$ $\evu_i \ge 0$; typically, $\vu$ is defined to indicate uniform teleport probabilities, i.e., $\forall i$ $\evu_i = 1/n$.
Let $\alpha \in [0,1)$ denote the \emph{random jump probability}.
Then the \emph{PageRank vector} $\vv$ is defined by this equation:
\begin{equation}
\vv = \alpha \mS \vv + (1-\alpha) \vu.
\label{eqn:pagerank}
\end{equation}

We say that $\vv$ is the PageRank vector of matrix $\mS$ with respect to teleport vector $\vu$.
The PageRank vector is given by the following closed-form formula:
\begin{equation}
\vv = (1-\alpha) \vu +  (1-\alpha) \sum_{x=1}^\infty \alpha^x \mS^x \vu =  (1-\alpha) \sum_{x=0}^\infty \alpha^x \mS^x \vu,
\label{eqn:pagerank-closed}
\end{equation}
where the convention $\mA^0 \equiv \mI$ is used in the last equality. If we define $\mM = (1-\alpha) \sum_{x=0}^\infty \alpha^x \mS^x$, then we observe the linear relationship between the PageRank and the teleport vectors: $\vv = \mM \vu$.

Computing the PageRank vector is not done by computing matrix $\mM$, but by iteratively estimating $\vv$ via Equation~\ref{eqn:pagerank}. Starting from some random values for $\vv$, satisfying $|\vv|=1$ and $\forall i$ $\evv_i \ge 0$, at each step we update the PageRank vector as $\vv \gets \alpha \mS \vv + (1-\alpha) \vu$. In other words, for a given $\vu$, we have a convenient method to estimate $\mM \vu$.

\stitle{Short-Term Impact.}
Using node centrality metrics, such as the PageRank or simply the number of citations, to capture the impact of a paper can introduce biases, e.g., against recent papers, and may render important papers harder to
distinguish~\cite{hwang2010yet,chen2007finding,yu2005adding}.
This is due to the inherent characteristics of citation networks: the 
references of a paper are fixed, and there is a delay between a paper's publication and its first citation, known as \emph{citation lag}~\cite{diodato1994dictionary}.
This phenomenon is best portrayed in Figure~\ref{fig:distro},
where it is shown that, for different citation networks
(introduced in Section~\ref{sec:exp:methodology}),
the bulk of citations comes a few years after the paper
is published.
In contrast, the \emph{short-term impact} \cite{KVSDV2019}, also called the number/count of new/future citations \cite{walker2007ranking,ghosh2011time}, of a paper looks into a future state of the citation network and reflects the level of attention (in terms of citations) a paper will receive in the near future.

\begin{figure*}[!t]
\label{fig:distro-and-example}
\centering
\begin{subfigure}[t]{0.3\textwidth}
  \includegraphics[width=\textwidth]{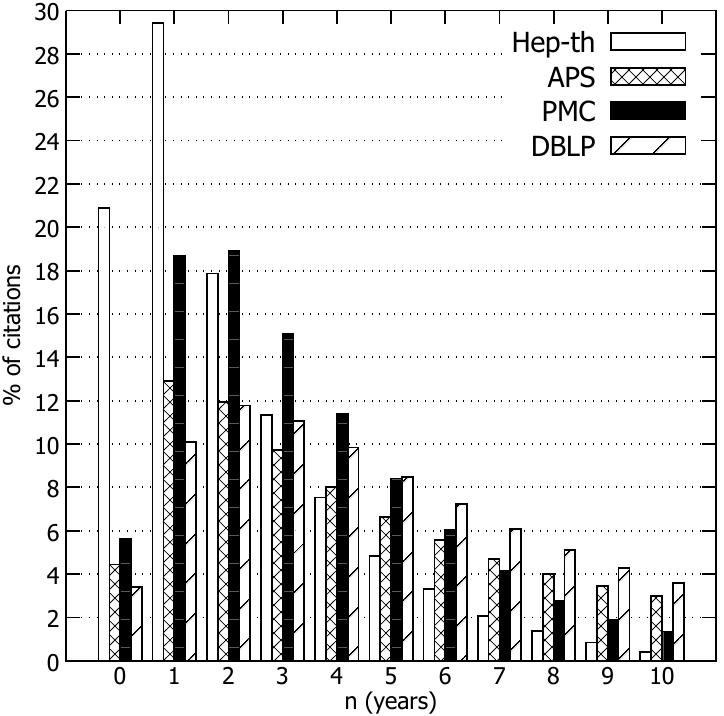}
  \caption{}
  \label{fig:distro}
\end{subfigure}%
\begin{subfigure}[t]{0.35\textwidth}
  \raisebox{0.6mm}{\includegraphics[width=\textwidth]{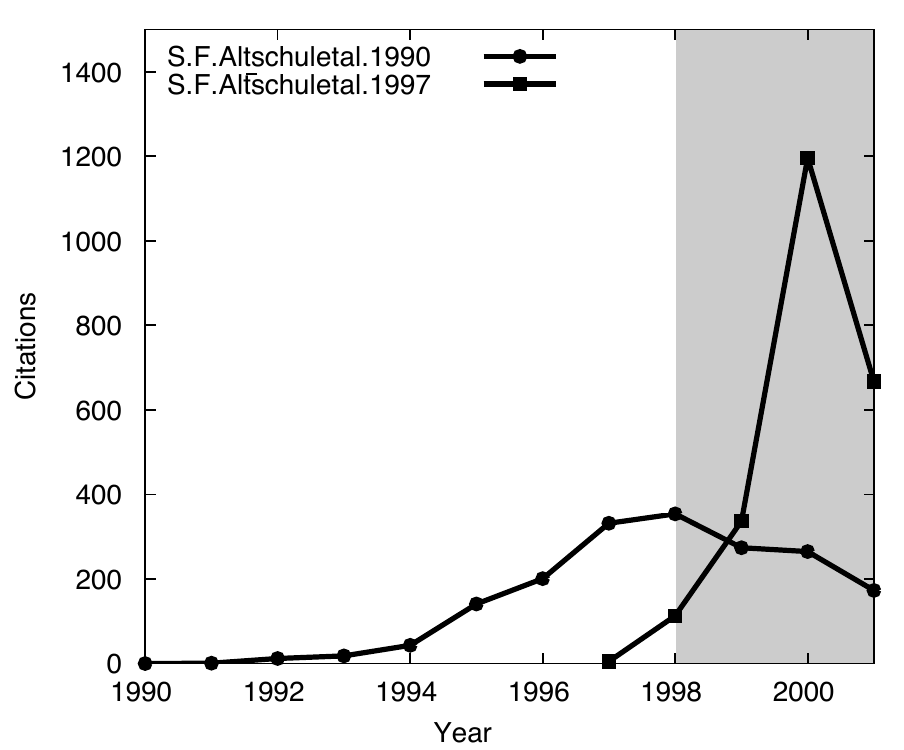}}
  \caption{}
  \label{fig:example}
\end{subfigure}
\caption{(a) Empirical
distribution of 
the fraction of total citations
received by papers $n$ years after
their publication
($n\leq10$), from four citation networks (see Section~\ref{sec:exp:methodology})
(b) A comparative yearly citation count of two papers.}
\end{figure*}

As a motivating example for the importance of
short-term impact, examine the case of two
seminal papers in the bioinformatics literature.
The first, published in 1990, introduces the initial version of the popular BLAST alignment algorithm, while the second, published in 1997, presents an improved alignment algorithm by the same team.
Figure~\ref{fig:example} comparatively illustrates
their yearly citation
counts.\footnote{Based on open citation
data from 
COCI 
(\url{http://opencitations.net/download}).}
Now, consider a bioinformatics researcher living in the year 1998. At that point in time, the older paper has a higher citation count. However, the newer paper is clearly more popular as it has a greater short-term impact, evidenced by the number of citations it collects in the next three years (highlighted in the figure).
The 1998 researcher would benefit from being
able to identify the newer paper as potentially having a higher
short-term impact.

Consider the state of the citation matrix at present time $n$.
Given a time \emph{horizon} of $\tau$, the \emph{short-term impact} (STI) $\evf_i$ of a paper $i$ (where $i \le n$) is defined as the number of future citations, i.e., those it would receive in the time period $(n,n+\tau]$:
\[
\evf_i = \sum_{j=n+1}^{n+\tau} \emC^{(n+\tau)}_{i,j}.
\]

Some observations are in order.
First, the time horizon $\tau$ is a user-defined parameter that specifies how long in the future one should wait for citations to accumulate. An appropriate value may depend on the typical duration of the research cycle (preparation, peer-reviewing, and publishing) specific to each scientific discipline.
Second, it is important to emphasize that STI can only be computed in retrospect; at current time $t_N$, the future citations are not yet observed. Thus, any method that seeks to identify papers with high STI has to employ a mechanism to account for the unobserved future citations.

With these remarks in mind and similar to prior work \cite{walker2007ranking,ghosh2011time,KVSDV2019}, we study the following problem.

\begin{problem}
Given the state $\mC^{(n)}$ of the citation network at current time $n$, return a ranking of papers such that it matches their ranking by short-term impact $\vf$ for a given time horizon $\tau$.
\end{problem}

\section{Approach}
\label{sec-method}

We first overview our approach presenting insight about how to rank based on STI. Then, we describe a useful tool, which we later exploit.

\subsection{Overview}

To rank by short term impact without knowing the future, one needs to introduce assumptions about how the citation network evolves.
In this work, we assume that the PageRank vector at time $t \ge n$ indicates the chance the papers will get cited at future time $t+1$.
This defines a mechanism, where a random researcher explores literature. While reading a paper, the researcher may choose with probability $\alpha$ to further read one of the paper's references, or pick another paper at random from some prior (teleport) probability.
So the probability of a paper $i$ receiving a citation from a random researcher at time $t+1$ is proportional to their PageRank $\evv^{(t)}_i$ at time $t$, which satisfies:
\[
\vv^{(t)} = \alpha \mS^{(t)} \vv^{(t)} + (1-\alpha) \vu^{(t)},
\]
where $\mS^{(t)}$ is the stochastic matrix and $\vu^{(t)}$ is the teleport vector at $t$.

The short-term impact of a paper is the sum of citations it will receive at each time $t$ between $n+1$ and $n+\tau$. Let us further assume that each paper in the future (after current time $n$) makes the same number of citations. So, under our assumptions, the number of future citations $\evf_i$ a paper $i$ receives is proportional to:
\[ \evf_i \propto  \sum_{t=0}^{\tau-1} \evv_{i}^{(n+t)}. \]

For the purposes of ranking, the scale of individual $\evf_i$ does not matter. Therefore, to rank by STI we would like to estimate the following vector:
\begin{equation}
\vy = \sum_{t=0}^{\tau-1} \vv_{[n]}^{(n+t)},
\label{eqn:target}
\end{equation}
i.e., the sum at each future timestamp of the PageRank of the present $n$ papers.

In Equation~\ref{eqn:target}, each PageRank vector $\vv_{[n]}^{(n+t)}$ is a random vector under the aforementioned citation network growth process, and its value depends on the values of the PageRank vector at previous times. Therefore, one way to estimate the mean of $\vy$ is with Markov Chain Monte Carlo methods to draw samples from the probability distribution of $\vy$. This however is costly, as each sample requires the computation of $\tau$ PageRank vectors for a large citation network with at least $n$ nodes.

We propose a different approach. We start by assuming that the future citation matrix and thus $\mS^{(n+\tau)}$ is known. Conditional to $\mS^{(n+\tau)}$, the PageRank vector $\vv_{[n]}^{(n+t)}$ become independent. We then apply a mechanism to rewrite Equation~\ref{eqn:target} so that it can be computed with a single PageRank computation. This rewriting now has quantities derived from the $\mS^{(n+\tau)}$ matrix. At the final step, we estimate these quantities from the current (at time $n$) state of the network, and compute the rewritten Equation~\ref{eqn:target}.

We next describe the mechanism that allows us to rewrite Equation~\ref{eqn:target} as a single PageRank computation.

\subsection{The Contraction Mechanism}
\label{sec:contraction}

Consider a network of $n+1$ nodes, where the $(n+1)$-th node has no incoming edges. Its stochastic matrix $\mS$ can be written as the following block matrix:
\[
\mS =
\begin{pmatrix}
\mS_{[n]} & \emS_{[n],n+1} \\
\vzero^\intercal & 0
\end{pmatrix},
\]
where $\mS_{[n]}$ is the submatrix of $\mS$ that contains the first $n$ rows and columns, vector $\emS_{[n],n+1}$ is the first $n$ rows of the $(n+1)$ column of $\mS$ and indicates the outgoing connections of node $n+1$, and $\vzero$ is the zero vector (here of size $n$).

Contraction provides the means to compute the PageRank vector for stochastic matrix $\mS$ by contracting node $n+1$ and computing a PageRank vector for $\mS_{[n]}$. To account for the contribution from node $n+1$ to its connections, the teleport probabilities to them are appropriately adjusted.

\begin{theorem}
\label{thm:contract}
Let $\vv = ( \vv_{[n]} \ \evv_{n+1})^\intercal$ denote the PageRank vector for stochastic matrix
$\mS = \begin{pmatrix}
\mS_{[n]} & \emS_{[n],n+1} \\
\vzero^\intercal & 0
\end{pmatrix}$
with respect to teleport vector $\vu = ( \vu_{[n]} \ \evu_{n+1})^\intercal$.
Define the adjusted teleport vector $\mathring\vu_{[n]} = \vu_{[n]} + \alpha \evu_{n+1} \emS_{[n],n+1}$.
Then, the normalized $\vv_{[n]}$ is the PageRank for matrix $\mS_{[n]}$ with respect to the normalized $\mathring\vu_{[n]}$, i.e., it holds that:
\[
\frac{1}{|\vv_{[n]}|} \vv_{[n]}  = \frac{\alpha}{|\vv_{[n]}|} \mS_{[n]} \vv_{[n]} + \frac{1-\alpha}{|\mathring\vu_{[n]}|}\mathring\vu_{[n]}.
\]
Moreover, $\evv_{n+1} = (1-\alpha) \evu_{n+1}$ and $|\vv_{[n]}| = |\mathring\vu_{[n]}| = 1 - \evv_{n+1}$.
\end{theorem}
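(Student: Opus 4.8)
The plan is to start from the closed-form PageRank expansion~\eqref{eqn:pagerank-closed} and exploit the block-triangular structure of $\mS$. Writing $\vv = \mM \vu$ with $\mM = (1-\alpha)\sum_{x=0}^\infty \alpha^x \mS^x$, the first step is to compute the powers $\mS^x$ in block form. Because the bottom row of $\mS$ is zero, an easy induction gives
\[
\mS^x = \begin{pmatrix} \mS_{[n]}^x & \mS_{[n]}^{x-1}\emS_{[n],n+1} \\ \vzero^\intercal & 0 \end{pmatrix}
\quad\text{for } x \ge 1,
\]
with $\mS^0 = \mI$. In particular the $(n{+}1,n{+}1)$ entry of $\mS^x$ is $0$ for $x\ge 1$, so reading off the last coordinate of $\vv = \mM\vu$ immediately yields $\evv_{n+1} = (1-\alpha)\evu_{n+1}$. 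That disposes of one of the three claimed identities and is the warm-up for the rest.

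Next I would read off the top block of $\vv = \mM\vu$. Splitting $\vu = (\vu_{[n]}\ \evu_{n+1})^\intercal$ and collecting the contributions of the two column-blocks of each $\mS^x$, the first $n$ coordinates satisfy
\[
\vv_{[n]} = (1-\alpha)\sum_{x=0}^\infty \alpha^x \mS_{[n]}^x \vu_{[n]} \;+\; (1-\alpha)\,\evu_{n+1}\sum_{x=1}^\infty \alpha^x \mS_{[n]}^{x-1}\emS_{[n],n+1}.
\]
Re-indexing the second sum ($x \mapsto x+1$) pulls out a factor $\alpha$ and turns it into $(1-\alpha)\,\alpha\,\evu_{n+1}\sum_{x=0}^\infty \alpha^x \mS_{[n]}^x \emS_{[n],n+1}$, so both sums share the same operator $(1-\alpha)\sum_{x=0}^\infty \alpha^x \mS_{[n]}^x = \mM_{[n]}$, the "contracted" PageRank operator. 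Hence $\vv_{[n]} = \mM_{[n]}\bigl(\vu_{[n]} + \alpha\,\evu_{n+1}\,\emS_{[n],n+1}\bigr) = \mM_{[n]}\,\mathring\vu_{[n]}$. By the linear characterization of PageRank stated after~\eqref{eqn:pagerank-closed}, $\vv_{[n]}$ is the (unnormalized) PageRank of $\mS_{[n]}$ with teleport vector $\mathring\vu_{[n]}$; after dividing through by $|\vv_{[n]}|$ and by $|\mathring\vu_{[n]}|$ one gets exactly the displayed fixed-point equation, since $\mM_{[n]}$ maps a teleport vector of mass $m$ to a PageRank vector of mass $m$ (as $\mS_{[n]}$ is column-substochastic here, this needs a one-line check rather than being automatic — see below).

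Finally I would pin down the masses. Summing all $n+1$ coordinates of $\vv = \alpha\mS\vv + (1-\alpha)\vu$ and using that $\mS$ has all column sums equal to $1$ except column $n+1$ which sums to $\|\emS_{[n],n+1}\|_1 \le 1$ (it is $1$ if node $n+1$ cites something, $0$ if it is dangling — wait, dangling is handled by the $1/n$ rule, so actually node $n+1$ with no incoming edges still has a full or teleport-filled column; I will just use $|\vv| = 1$, which holds because PageRank preserves total mass) gives $|\vv_{[n]}| + \evv_{n+1} = 1$, hence $|\vv_{[n]}| = 1 - (1-\alpha)\evu_{n+1}$. And $|\mathring\vu_{[n]}| = |\vu_{[n]}| + \alpha\evu_{n+1}\|\emS_{[n],n+1}\|_1 = (1-\evu_{n+1}) + \alpha\evu_{n+1}\|\emS_{[n],n+1}\|_1$; matching this to $1 - \evv_{n+1}$ forces $\|\emS_{[n],n+1}\|_1 = 1$, i.e. it uses the hypothesis that node $n+1$ is not dangling (or equivalently that its column, including the self/teleport correction, sums to $1$ within the first $n$ rows — this is where I must be careful about the dangling-node convention).

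The main obstacle I anticipate is exactly this bookkeeping of column masses and the normalization constants: the theorem's clean statement $|\vv_{[n]}| = |\mathring\vu_{[n]}| = 1 - \evv_{n+1}$ is only true under the implicit assumption that $\emS_{[n],n+1}$ is a genuine probability vector (node $n+1$ references at least one paper, all of which are among the first $n$ — automatic since it has the largest id), and one has to make sure the column-substochasticity of $\mS_{[n]}$ does not break the "PageRank preserves mass" step. Everything else is the routine block-matrix computation sketched above; the conceptual content is the re-indexing trick that collapses both geometric series onto the single operator $\mM_{[n]}$.
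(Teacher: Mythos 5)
Your proposal is correct and follows essentially the same route as the paper's own proof: the block formula for $\mS^x$, reading the last coordinate off the closed form to get $\evv_{n+1}=(1-\alpha)\evu_{n+1}$, the re-indexing of the second geometric series to collapse everything onto $\mM_{[n]}$ so that $\vv_{[n]}=\mM_{[n]}\mathring\vu_{[n]}$, and the final mass bookkeeping via $\sum_{i=1}^{n}\emS_{i,n+1}=1$. Your closing caveat about $\emS_{[n],n+1}$ needing to be a genuine probability vector supported on the first $n$ coordinates is exactly the hypothesis the paper uses implicitly (and states explicitly only in the proof of the iterated version, Theorem~\ref{thm:contract_iter}), so it is resolved rather than a gap.
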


\begin{proof}
Observe that the $x$-th power ($x \ge 1$) of $\mS$ is given by:
\[
\mS^x =
\begin{pmatrix}
\mS_{[n]}^x & \mS_{[n]}^{x-1}\emS_{[n],n+1} \\
\vzero^\intercal & 0
\end{pmatrix}.
\]
From Equation~\ref{eqn:pagerank-closed}, we have:

\begin{align*}
\begin{pmatrix} \vv_{[n]} \\ \evv_{n+1} \end{pmatrix} = & \
(1-\alpha) \begin{pmatrix} \vu_{[n]} \\ \evu_{n+1} \end{pmatrix} \\
+ & \ (1-\alpha) \sum_{x=1}^\infty \alpha^x
\begin{pmatrix}
\mS_{[n]}^x & \mS_{[n]}^{x-1}\emS_{[n],n+1} \\
\vzero^\intercal & 0
\end{pmatrix}
\begin{pmatrix} \vu_{[n]} \\ \evu_{n+1} \end{pmatrix}.
\end{align*}

From the bottom row, we immediately get $\evv_{n+1} = (1-\alpha) \evu_{n+1}$, while $|\vv_{[n]}| = 1 - \evv_{n+1}$.

We expand the top row to derive:
\begin{align*}
\vv_{[n]} =& \ (1-\alpha) \vu_{[n]} + (1-\alpha) \sum_{x=1}^\infty \alpha^x \mS_{[n]}^x \vu_{[n]} \\
& \  +  (1-\alpha) \sum_{x=1}^\infty \alpha^x \evu_{n+1} \mS_{[n]}^{x-1}\emS_{[n],n+1} \\
 =& \ (1-\alpha) \vu_{[n]} + (1-\alpha) \sum_{x=1}^\infty \alpha^x \mS_{[n]}^x \vu_{[n]} \\
& \  + (1-\alpha)\alpha \evu_{n+1} \emS_{[n],n+1} \\
& \ + (1-\alpha)\alpha \sum_{x=1}^\infty \alpha^x \evu_{n+1} \mS_{[n]}^x\emS_{[n],n+1} \\
 =& \ (1-\alpha) (\vu_{[n]} + \alpha \evu_{n+1} \emS_{[n],n+1} ) \\
 & \  +  (1-\alpha) \sum_{x=1}^\infty \alpha^x \mS_{[n]}^x (\vu_{[n]} + \alpha \evu_{n+1} \emS_{[n],n+1} )\\
 =& \ (1-\alpha) \mathring\vu_{[n]} +  (1-\alpha) \sum_{x=1}^\infty \alpha^x \mS_{[n]}^x \mathring\vu_{[n]}.
\end{align*}

By introducing $\mM_{[n]} = (1-\alpha) \sum_{x=0}^\infty \alpha^x \mS_{[n]}^x$, the previous equation is written as $\vv_{[n]} = \mM_{[n]} \mathring\vu_{[n]}$. Multiplying by $1/|\mathring\vu_{[n]}|$ we get that $\frac{1}{|\mathring\vu_{[n]}|} \vv_{[n]}$ is the PageRank vector for $\mS_{[n]}$ with respect to normalized $\mathring\vu_{[n]}$.
It remains to show that $|\mathring\vu_{[n]}| = |\vv_{[n]}|$, which holds because:
\begin{align*}
|\mathring\vu_{[n]}| =& \ |\vu_{[n]}| + \alpha \evu_{n+1} \sum_{i=1}^n \emS_{i,n+1} = 1 - \evu_{n+1} + \alpha \evu_{n+1}  \\
 =& \ 1 - (1-\alpha) \evu_{n+1} = 1 - \evv_{n+1} = |\vv_{[n]}|.
\end{align*}
\end{proof}

Contraction can be applied recursively to networks containing multiple nodes that have no incoming edges. Consider a network of $n+m$ nodes, represented by its stochastic matrix $\mS$, where its $m$ last nodes have no incoming edges. Given some teleport vector $\vu$ and random jump probability $\alpha$, let $\vv$ denote the PageRank vector that satisfies $\vv = (1-\alpha)\sum_{x=0}^\infty \alpha^x \mS^x \vu$. The PageRank scores $\vv_{[n]}$ of the first $n$ nodes can be computed directly from the PageRank on the stochastic matrix $\mS_{[n]}$ with respect to an adjusted teleport vector $\mathring\vu$, as indicated by the following theorem.

\begin{theorem}
\label{thm:contract_iter}
Let $\vv$ denote the PageRank vector for stochastic matrix $\mS =
\begin{pmatrix}
\mS_{[n]} & \mS_{[n],[-m]} \\
\vzero_{(m \times n)} & \vzero_{(m \times m)}
\end{pmatrix}$
with respect to teleport vector $\vu$ and random jump probability $\alpha$. Define the adjusted teleport vector $\mathring\vu_{[n]} = \vu_{[n]} + \alpha \sum_{i=1}^{m} \evu_{n+i} \emS_{[n], n+i}$. Then it holds that:
\[
\frac{1}{|\vv_{[n]}|} \vv_{[n]}  = \frac{\alpha}{|\vv_{[n]}|} \mS_{[n]} \vv_{[n]} + \frac{1-\alpha}{|\mathring\vu_{[n]}|}\mathring\vu_{[n]},
\]
where $|\mathring\vu_{[n]}|=|\vv_{[n]}| = 1 - (1-\alpha) \sum_{i=1}^{m} \evu_{n+i}$.
\end{theorem}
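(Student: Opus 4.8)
The plan is to reduce Theorem~\ref{thm:contract_iter} to Theorem~\ref{thm:contract} by contracting the $m$ sink nodes one at a time. The cleanest route, however, is to mimic the direct calculation used in the proof of Theorem~\ref{thm:contract}, since the block structure of $\mS$ makes the powers $\mS^x$ just as transparent. Concretely, because the bottom $m$ rows of $\mS$ are zero, for $x\ge 1$ we have
\[
\mS^x =
\begin{pmatrix}
\mS_{[n]}^x & \mS_{[n]}^{x-1}\mS_{[n],[-m]} \\
\vzero_{(m\times n)} & \vzero_{(m\times m)}
\end{pmatrix},
\]
which is the only structural fact needed. I would substitute this into the closed-form $\vv = (1-\alpha)\sum_{x=0}^\infty \alpha^x \mS^x \vu$ from Equation~\ref{eqn:pagerank-closed} and read off the two block rows.

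First I would handle the bottom block: it gives $\vv_{[-m]} = (1-\alpha)\vu_{[-m]}$ directly, hence $\evv_{n+i} = (1-\alpha)\evu_{n+i}$ for each $i$, and therefore $|\vv_{[n]}| = 1 - |\vv_{[-m]}| = 1 - (1-\alpha)\sum_{i=1}^m \evu_{n+i}$. Next I would expand the top block:
\[
\vv_{[n]} = (1-\alpha)\vu_{[n]} + (1-\alpha)\sum_{x=1}^\infty \alpha^x \mS_{[n]}^x \vu_{[n]} + (1-\alpha)\sum_{x=1}^\infty \alpha^x \mS_{[n]}^{x-1}\mS_{[n],[-m]}\,\vu_{[-m]}.
\]
The third term is the only new piece relative to Theorem~\ref{thm:contract}: here $\mS_{[n],[-m]}\vu_{[-m]} = \sum_{i=1}^m \evu_{n+i}\,\emS_{[n],n+i}$, so writing $\vw = \sum_{i=1}^m \evu_{n+i}\,\emS_{[n],n+i}$ the third term is $(1-\alpha)\sum_{x=1}^\infty \alpha^x \mS_{[n]}^{x-1}\vw$. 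Reindexing $x\mapsto x+1$ splits this into $(1-\alpha)\alpha\vw + (1-\alpha)\alpha\sum_{x=1}^\infty \alpha^x \mS_{[n]}^x \vw$, exactly as in the single-node case. Collecting terms, $\vv_{[n]} = (1-\alpha)(\vu_{[n]} + \alpha\vw) + (1-\alpha)\sum_{x=1}^\infty \alpha^x \mS_{[n]}^x(\vu_{[n]}+\alpha\vw) = \mM_{[n]}\mathring\vu_{[n]}$ with $\mathring\vu_{[n]} = \vu_{[n]} + \alpha\vw$, which matches the stated definition.

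Finally I would verify the normalization identity $|\mathring\vu_{[n]}| = |\vv_{[n]}|$: since each $\emS_{[n],n+i}$ is a (sub)stochastic column summing to $1$ over its $n$ entries — assuming, as the block form implicitly does, that the sink nodes are not dangling within the first $n$ coordinates — we get $|\mathring\vu_{[n]}| = |\vu_{[n]}| + \alpha\sum_i \evu_{n+i} = 1 - \sum_i \evu_{n+i} + \alpha\sum_i \evu_{n+i} = 1 - (1-\alpha)\sum_i \evu_{n+i}$, which equals $|\vv_{[n]}|$ computed above. Dividing $\vv_{[n]} = \mM_{[n]}\mathring\vu_{[n]}$ through by this common norm yields the claimed PageRank fixed-point equation for $\mS_{[n]}$ with teleport vector $\mathring\vu_{[n]}/|\mathring\vu_{[n]}|$.

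I expect the main obstacle to be purely bookkeeping: keeping the reindexing of the geometric series straight when the ``extra'' term carries an $\mS_{[n]}^{x-1}$ rather than $\mS_{[n]}^x$, and being careful that $\sum_{i=1}^n \emS_{i,n+j} = 1$ (i.e. node $n+j$ genuinely points into the first $n$ nodes and is not dangling) so that the norm computation closes. An alternative, if one prefers to avoid re-deriving everything, is an induction on $m$ applying Theorem~\ref{thm:contract} to peel off node $n+m$, then noting that after contraction the remaining matrix still has $m-1$ trailing sink rows and the adjusted teleport vector has picked up exactly the $\alpha\evu_{n+m}\emS_{[n+m-1],n+m}$ correction; one must then check the corrections telescope into $\alpha\sum_{i=1}^m \evu_{n+i}\emS_{[n],n+i}$, which again is routine once the normalization constants are tracked.
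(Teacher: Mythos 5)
Your proof is correct, but it takes a different route from the paper. The paper proves Theorem~\ref{thm:contract_iter} by iterating Theorem~\ref{thm:contract}: it contracts the sink nodes $n+m, n+m-1, \dots, n+1$ one at a time, shows that the successive teleport-vector corrections telescope into $\alpha\sum_{i=1}^{m}\evu_{n+i}\emS_{[n],n+i}$, and only at the end verifies the normalization identities (including the observation, which you also make, that each column $n+i$ has all its mass in the first $n$ rows, so $\sum_{j=1}^{n}\emS_{j,n+i}=1$). You instead redo the power-series computation of Theorem~\ref{thm:contract}'s proof directly on the $m$-column block: the identity $\mS^x = \bigl(\begin{smallmatrix}\mS_{[n]}^x & \mS_{[n]}^{x-1}\mS_{[n],[-m]}\\ \vzero & \vzero\end{smallmatrix}\bigr)$, substitution into Equation~\ref{eqn:pagerank-closed}, the reindexing of the extra geometric series, and the collection into $\vv_{[n]}=\mM_{[n]}\mathring\vu_{[n]}$ all go through exactly as you describe, and your norm computation matches the statement. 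Your one-shot argument is self-contained and avoids the bookkeeping the paper's induction incurs in tracking normalized versus unnormalized adjusted teleport vectors across $m$ successive applications of Theorem~\ref{thm:contract}; the paper's route buys reuse of the single-node lemma as a black box and keeps the ``contract one node at a time'' intuition visible. You correctly flag, and resolve, the only delicate point in either version: that the block form forces the sink columns to be stochastic when restricted to the first $n$ coordinates, which is what makes $|\mathring\vu_{[n]}|=|\vv_{[n]}|$ close. (You also sketch the paper's induction as your fallback alternative, so you have effectively identified both proofs.)
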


\begin{proof}
Follows from applying Theorem~\ref{thm:contract} $m$ times. At the first iteration, the adjusted teleport vector is:
\[
\vu_{[n+m-1]}^{(n+m-1)} = \vu_{[n+m-1]} + \alpha \evu_{n+m} \emS_{[n+m-1], n+m}.
\]
At the second iteration:
\begin{align*}
\vu_{[n+m-2]}^{(n+m-2)} =\ & \vu_{[n+m-2]}^{(n+m-1)} + \alpha \evu_{n+m-1} \emS_{[n+m-2], n+m-1}\\
=\ &  \vu_{[n+m-2]} + \alpha \evu_{n+m} \emS_{[n+m-2], n+m}\\
& + \alpha \evu_{n+m-1} \emS_{[n+m-2], n+m-1}\\
=\ &  \vu_{[n+m-2]} + \alpha \sum_{i=m-1}^m  \evu_{n+i} \emS_{[n+m-2], n+i}.
\end{align*}
At the $m$-th iteration, we get $\vu_{[n]}^{(n)} = \mathring\vu_{[n]}$.

Moreover, for each $i$, such that $1 \le i \le m$, we have $\evv_{n+i} = (1-\alpha) \evu_{n+i}$, which means $|\vv_{[n]}| = 1 - \sum_{i=1}^m \evv_{n+i} = 1 - (1-\alpha) \sum_{i=1}^{m} \evu_{n+i}$.

Finally, for each $i$, it holds that $\sum_{j=1}^{n+i} \emS_{j,n+i} =  \sum_{j=1}^{n} \emS_{j,n+i} = 1$, since node $n+i$ has outgoing edges only to nodes with id not greater than $n$. As a consequence, $|\mathring\vu_{[n]}|=|\vv_{[n]}|$.
\end{proof}

In other words, to compute the PageRank w.r.t.\ $\mS$ for the first $n$ nodes, we can use an adjusted teleport vector $\mathring\vu_{[n]}$ (after normalization) to compute the PageRank w.r.t.\ $\mS_{[n]}$, and then scale the result by $|\mathring\vu_{[n]}|$.

\subsection{The \method Method}

We will apply the contraction idea to compute each future PageRank vector $\vv_{[n]}^{(n+t)}$ as a PageRank vector of the current stochastic matrix $\mS^{(n)}$, assuming that its future state $\mS^{(n+t)}$ is known. Note that to apply the contraction idea, we need to restrict each future paper $i$ to cite no other future paper $j$ ($n<j<i$), i.e., citations only come for papers already published until current time $n$. This restriction only affects the PageRank values of the future papers, which however we do not need to rank.

Note that for any time $n+t$ where $t>0$, the first $n$ rows and columns of the stochastic matrix $\mS^{(n+t)}$ remain constant, and we denote $\mS \equiv \mS_{[n]}^{(n+t)}$. From Theorem~\ref{thm:contract_iter} we have:
\[
\frac{1}{|\vv_{[n]}^{(n+t)}|} \vv_{[n]}^{(n+t)}  = \frac{\alpha}{|\vv_{[n]}^{(n+t)}|} \mS \vv + \frac{1-\alpha}{|\vv_{[n]}^{(n+t)}|}\mathring\vu_{[n]}^{(n+t)}.
\]
Defining $\mM = (1-\alpha) \sum_{x=0}^\infty \alpha^x \mS^x$, we rewrite the previous equation in the closed form of Equation~\ref{eqn:pagerank-closed}:
\[
\vv_{[n]}^{(n+t)} = \mM \mathring\vu_{[n]}^{(n+t)}.
\]

From the definition of STI, we derive:
\[
\vy = \sum_{t=0}^{\tau-1} \vv_{[n]}^{(n+t)} =  \mM \sum_{t=0}^{\tau-1} \mathring\vu_{[n]}^{(n+t)},
\]
where $\mathring\vu_{[n]}^{(n+t)} = \vu_{[n]}^{(n+t)} + \alpha \sum_{i=1}^{m} \evu_{n+i}^{(n+t)} \emS_{[n], n+i}^{(n+t)}$.

For convenience, we further assume that the teleport vector for the first $n$ papers is the same at each time $n+t$, and we denote it as $\vu \equiv \vu_{[n]}^{(n+t)}$. We thus split the adjusted teleport vectors into a non-time-dependent component and a time-dependent component: $\mathring\vu_{[n]}^{(n+t)} = \vu + \vw_{[n]}^{(n+t)}$. Summing the time-dependent components for $0 \le t \le \tau - 1$, we introduce:
\[
\vw \equiv \sum_{t=0}^{\tau-1} \vw_{[n]}^{(n+t)} = \alpha \sum_{t=0}^{\tau-1} \sum_{i=1}^{t} \evu_{n+i} \emS^{(n+t)}_{[n], n+i} .
\]
Then, the STI can be expressed as:
\[
\vy = \mM (\tau \vu + \vw),
\]
meaning that $\hat{\vy} \equiv \frac{\vy}{|\vy|}$ can be computed as the PageRank vector of matrix $\mS$ with respect to teleport vector $\frac{\tau}{|\vy|} \vu + \frac{1}{|\vy|} \vw$. Introducing coefficients $ \alpha, \beta, \gamma \in [0,1]$, such that $\alpha + \beta + \gamma = 1$, we can write STI in a general form:
\begin{equation}
\hat\vy = \alpha \mS \hat\vy + \beta \hat\vw + \gamma \hat\vu,
\label{eqn:estimation}
\end{equation}
where $\hat\vw$ and $\hat\vu$ are the normalized vectors of $\vw$ and $\vu$, respectively.

\stitle{Attention.} Because the time-dependent vector $\vw$ is determined by quantities of future states $\mS^{(n+t)}$, we need to estimate it from the known current state $\mS$. A simple way is instead of going $\tau$ time steps in the future, to go $y$ time steps in the past. Assuming teleport probabilities $\evu_{n+t}$ for future papers are equal, we estimate:
\begin{equation}
\tilde\vw \propto \sum_{t=0}^{y-1} \sum_{i=0}^{t} \emS_{[n], n-i} = \sum_{i=0}^{y-1} (y-i) \emS_{[n], n-i}.
\label{eqn:att_estimation}
\end{equation}

We call this estimated vector $\tilde\vw$ the \emph{attention} vector, because for each paper $i$, it computes a weighted count of its citations from the $y$ most recent papers, i.e., its recent attention $\tilde\evw_i \propto y \emS_{i, n} + (y-1) \emS_{i, n-1} + \dots + \emS_{i, n-y+1}$.

Note that if $\vu$ represents the regular teleport probability vector, the attention vector $\vw$ introduces yet another teleport probability, suggesting that researchers tend to to read and cite currently trending papers.
i.e., papers that have recently received significant attention
from the scientific community.
To investigate this hypothesis, we explore
four citation networks (as per the default experimental configuration discussed in Section~\ref{sec:exp:methodology}), and count how many top-$100$ papers were recently popular, based on STI
(i.e., were among the top cited
in the past $5$ years).
As we see in Table~\ref{tab:attention}, roughly half of the top-$100$ papers were, indeed, recently popular.
This observation validates our assumption that the level of attention a paper has recently attracted is indicative of its ability to attract citations in the short-term.

\begin{table}[t]
\centering
\caption{Recently popular papers in top-$100$ (accord. to STI)}
\label{tab:attention}
\begin{tabular}{c|c c c c}
\hline
\textbf{Dataset} & \textbf{hep-th} & \textbf{APS} &
\textbf{PMC} & \textbf{DBLP} \\
\hline
\textbf{Recently Popular} & $41$ & $54$ & $54$ & $63$\\
\hline
\end{tabular}

\end{table}

Attention, however, is not the only mechanism that governs which papers researchers read. Naturally, researchers may read a paper cited in the reference list of another paper. Moreover, similar to previous work \cite{walker2007ranking,sayyadi2009futurerank}, we assume that researchers also read recently published papers. Specifically, we capture the recency of a paper $i$ using a score that decays exponentially based on the paper's age:
\begin{equation}
\label{exponential-factor}
    \evu_i = c e^{\eta \cdot (n - i)},
\end{equation}
where  $n$ is the current time,
$i<n$ denotes the publication time
of paper $ i $, hyperparameter $\eta$ is
a negative constant (as $n-i \geq 0$),
and $c$ is normalization constant so that $|\vu|=1$.
To calculate a proper $\eta$ value, a similar
procedure like the one used in~\cite{sayyadi2009futurerank} can be
followed (see also Section~\ref{param_tuning}).

\stitle{\method.} We refer to the ranking approach based on Equations~\ref{eqn:estimation}, \ref{eqn:att_estimation}, and \ref{exponential-factor} as \method. Note that Eq.~\ref{eqn:estimation} combines three mechanisms.
Specifically, we assume that
a researcher may read 
a paper for one of
the following reasons: the paper gathered
attention recently, the paper was recently published,
or the paper was found in another paper's
reference list. We model this behavior with the following random process.
A researcher, after 
reading paper
$ i$, chooses to read any other paper from $ i $'s
reference list, with probability $ \alpha $. 
With probability $ \beta $ she
chooses a paper
based on its attention.
This behavior essentially makes recently rich papers even richer, and is reminiscent of a
time-restricted preferential attachment mechanism
of the Barab\'asi-Albert model of network
growth~\cite{barabasi2014network}.
Finally, with probability $ \gamma $
she chooses
any paper with a
preference towards recently published ones.

Two special values for coefficient $\beta$ are worth mentioning.
First, observe that when $\beta=0$, a setting we call NO-ATT (for no attention), the model becomes similar to time-aware methods that address the inherent bias against new papers in citation networks (see also Section~\ref{sec-related}, and \cite{KVSDV2019} for a thorough coverage of such approaches). Note that additionally setting $\eta=0$ in Eq.~\ref{exponential-factor} recovers PageRank. Second, when $\beta=1$, a setting we call ATT-ONLY (for attention only), \method is solely based on the attention mechanism, assuming that the recent citation patterns will persist exactly in the near future. To the best of our knowledge, ATT-ONLY has not been considered in the literature as a means to estimate the short-term impact of a paper. As we 
show in Section~\ref{sec-exp}, attention alone is a powerful mechanism, 
often more effective than existing approaches. However, $\beta=1$ is never the optimal setting; it is always better to consider attention in combination with the other two citation mechanisms.

Equation~\ref{eqn:estimation} describes an iterative process for estimating STI vector: starting with a random value, at each step update the vector with the right hand side of the equation. This process is repeated until the values converge. The following theorem, ensures that convergence is achievable.

\begin{theorem}
\label{th:convergence}
The iterative process defined by Eq.~\ref{eqn:estimation} converges.
\end{theorem}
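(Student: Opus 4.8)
The plan is to recognize Equation~\ref{eqn:estimation} as an ordinary PageRank iteration for a modified stochastic matrix, and then invoke the standard convergence argument for PageRank (contraction in the appropriate norm, or equivalently the Perron--Frobenius/power-iteration argument). The key observation is that the two teleport-like vectors $\hat\vw$ and $\hat\vu$ can be merged: since $\beta+\gamma = 1-\alpha$ and both $\hat\vw, \hat\vu$ are nonnegative with unit $L^1$ norm, the combined vector $\vq \equiv \frac{\beta}{1-\alpha}\hat\vw + \frac{\gamma}{1-\alpha}\hat\vu$ is itself a valid teleport vector, i.e.\ $\vq \ge \vzero$ entrywise and $|\vq| = \frac{\beta}{1-\alpha}|\hat\vw| + \frac{\gamma}{1-\alpha}|\hat\vu| = 1$. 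Then Eq.~\ref{eqn:estimation} reads exactly $\hat\vy = \alpha \mS \hat\vy + (1-\alpha)\vq$, which is the defining fixed-point equation~\eqref{eqn:pagerank} for the PageRank vector of $\mS$ with respect to teleport vector $\vq$ and jump probability $\alpha < 1$.

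Next I would carry out the convergence step for the map $T(\vx) = \alpha \mS \vx + (1-\alpha)\vq$. First, $T$ maps the probability simplex $\Delta = \{\vx : \vx \ge \vzero,\ |\vx|=1\}$ into itself: $\mS$ is column-stochastic, so $|\mS\vx| = |\vx|$ for $\vx \in \Delta$ (using that all entries are nonnegative and each column of $\mS$ sums to $1$, including the dangling-node columns which sum to $1$ by the $\frac1n$ convention), hence $|T(\vx)| = \alpha|\vx| + (1-\alpha)|\vq| = 1$, and nonnegativity is clear. Second, for any $\vx,\vx' \in \Delta$ we have $T(\vx) - T(\vx') = \alpha \mS(\vx - \vx')$, and since $\vx - \vx'$ has zero coordinate-sum, a short argument (column-stochasticity of $\mS$ implies $\|\mS\vz\|_1 \le \|\vz\|_1$ in general, with the bound improving to contraction on the zero-sum subspace, or one simply uses $\|\mS\vz\|_1 \le \|\vz\|_1$ together with $\alpha<1$) gives $\|T(\vx)-T(\vx')\|_1 \le \alpha \|\vx-\vx'\|_1$. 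Thus $T$ is an $\alpha$-contraction on the complete metric space $(\Delta, \|\cdot\|_1)$, and by the Banach fixed-point theorem the iteration $\hat\vy \gets T(\hat\vy)$ converges (geometrically, at rate $\alpha$) to the unique fixed point, regardless of the starting point in $\Delta$. Equivalently one may cite the closed form~\eqref{eqn:pagerank-closed}, whose Neumann series $(1-\alpha)\sum_{x\ge 0}\alpha^x \mS^x \vq$ converges absolutely because $\|\mS^x\vq\|_1 \le 1$ and $\sum \alpha^x < \infty$.

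The main obstacle — really the only subtlety — is justifying the contraction constant $\alpha$ rather than merely $\|T(\vx)-T(\vx')\|_1 \le \|\vx-\vx'\|_1$: a column-stochastic matrix is generally only non-expansive in $\ell_1$, not contractive. This is resolved by restricting to the difference of two simplex points, which lies in the hyperplane $\{\vz : \sum_i z_i = 0\}$; on this subspace $\|\mS\vz\|_1 \le \|\vz\|_1$ still holds, but the crucial point is that the factor $\alpha$ out front already supplies the contraction, so one only needs non-expansiveness of $\mS$ on zero-sum vectors, which follows directly from $\sum_i |(\mS\vz)_i| \le \sum_i \sum_j \emS_{i,j}|z_j| = \sum_j |z_j|$. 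A secondary, essentially cosmetic, point to check is that $\mS$ is genuinely column-stochastic under the paper's conventions — in particular that dangling columns sum to $1$ — which is immediate from the definition $\emS_{i,j} = \frac1n$ for dangling $j$. With these in hand the theorem follows.
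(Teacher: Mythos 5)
Your proof is correct, but it takes a genuinely different route from the paper's. The paper folds all three terms into a single column-stochastic matrix $\mR$ with $\emR_{i,j}=\alpha\emS_{i,j}+\beta\hat\evw_i+\gamma\hat\evu_i$, views the iteration as a power iteration on $\mR$, and appeals to irreducibility and aperiodicity (the standard Perron--Frobenius/Markov-chain argument for PageRank). You instead merge only the two teleport terms into $\vq=\tfrac{\beta}{1-\alpha}\hat\vw+\tfrac{\gamma}{1-\alpha}\hat\vu$ and show the affine map $T(\vx)=\alpha\mS\vx+(1-\alpha)\vq$ is an $\alpha$-contraction on the simplex in $\ell_1$, invoking Banach's fixed-point theorem. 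Your argument is more elementary and in fact somewhat stronger: it yields uniqueness of the fixed point and an explicit geometric rate $\alpha$, and it does not require $\mR$ to be irreducible or aperiodic --- a claim the paper asserts without verification and which genuinely needs the teleport part $\beta\hat\vw+\gamma\hat\vu$ to be entrywise positive (true when $\gamma>0$ since $\hat\vu$ is a positive exponential, but not automatic in the ATT-ONLY case $\gamma=0$ where $\hat\vw$ can have zero entries). The contraction route sidesteps that edge case entirely, needing only $\alpha<1$ and column-stochasticity of $\mS$; the paper's route is shorter to state and situates the result within familiar PageRank/Markov-chain machinery. Both are valid proofs of the theorem as stated.
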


\begin{proof}

We can rewrite
Equation~\ref{eqn:estimation} in matrix form as:
\begin{equation}
\label{paper:vector}
\hat\vy = \mR \hat\vy
\end{equation}
where $\mR$ is a matrix satisfying:

\begin{equation}
    \label{matrix:R}
    \emR_{i,j}  = \alpha \emS_{i,j} 
    + \beta  \hat\evw_i 
    + \gamma \hat\evu_i
\end{equation}

In other words, matrix $ \mR$ is a modified
citation matrix, artificially expanded with
directed edges from any node to any other in the
network.
It is easy to see that it is a stochastic matrix
where each column sums to 1.
Moreover, it satisfies the conditions of irreducibility
and aperiodicity~\cite{langville2011google} and
thus the iterative process is
guaranteed to converge.
\end{proof}


\section{Evaluation}
\label{sec-exp}

This section presents a thorough experimental evaluation of our approach for ranking papers based on their short-term impact. 
Specifically, Section~\ref{sec:exp:methodology} discusses the experimental setup and evaluation approach taken. Section~\ref{param_tuning} investigates the effectiveness of our proposed method and the importance of the attention-based mechanism. Section~\ref{sec:exp:comparison} compares \method with existing approaches from the literature. Finally, Section~\ref{sec:exp:converge} discusses the convergence rate of \method.

\subsection{Experimental Setup}
\label{sec:exp:methodology}

\stitle{Datasets.}
We consider four datasets in our experiments: 

\begin{enumerate}

\item arXiv's high energy physics (hep-th)
collection, which was provided by the 2003 
KDD cup.\footnote{\url{http://www.cs.cornell.edu/projects/kddcup/datasets.html}}
This collection consists of approximately 
27,000 papers with 350,000 references, 
written by 12,000 authors from 1992 to 2003. 
\item A collection of papers provided by 
the American Physical Society 
(APS)\footnote{\url{https://journals.aps.org/about}}, 
which contains about 500,000 papers
with 6 million references, written by 
about 389,000 authors from 1893 to 2014. 
\item A collection of open access papers from 
pubmed central\footnote{\url{https://www.ncbi.nlm.nih.gov/pmc/}} (PMC), which consists of about
1 million papers with 665,000 references,
written by 5 million authors, from 1896 to
2016.
\item A collection of about $3$
million papers and $25$ million references,
written by more than $1.7$ million authors,
from the computer science domain
(DBLP)\footnote{\url{https://aminer.org/citation}}, 
published from 1936 to 2018.
\end{enumerate}

\stitle{Evaluation Methodology.}
To evaluate the effectiveness of 
\method~in ranking papers based on their short-term 
impact, we construct a \emph{current} and a \emph{future} state of the citation network.
We partition each 
dataset according to time in two parts, each having equal number of papers. We use the older half  
to construct the current state of the citation network, denoted as $\mC^{(n)}$. 
All ranking methods will be based on this network acting as the ``training'' subset. 
We use parts of the newer half to construct the future state of the network, denoted as $\mC^{(n+\tau)}$.
All ranking methods will be evaluated based on this network acting as the ``test'' subset.

Specifically, the future state is constructed as follows.
We vary the size, in terms of number of papers, of the future state relative to the size of the current state.
Thus we do not vary the time horizon $\tau$ directly, but rather the \emph{test ratio}, which is
the relative size of the future with respect to the current network.
We consider values for the test ratio among $\{1.2,1.4,1.6,1.8,2.0\}$, where $2.0$ corresponds to using all 
citations in the dataset to define the future state.
In some experiments we fix the test ratio to a default value of $1.6$, meaning that the future state contains $30$\% more papers than the current state.
Table~\ref{ratio-years} presents, for each dataset, 
the length in years of the time horizon that corresponds to each test ratio value.
Note, that the
relationship between test ratio and $\tau$ is not linear, due to  the non-constant number 
of published papers per year and the fact that most datasets contain
incomplete entries for the last year they include.

\begin{table}[!t]
\caption{Correspondence of the Test Ratio to the Time Horizon}
\label{ratio-years}
\centering
\begin{tabular}{c | c c c c}
\hline
\textbf{Test} & \multicolumn{4}{c}{\textbf{Time Horizon $\tau$ (in years)}}\\
\textbf{Ratio} & \textbf{hep-th} & \textbf{APS} & \textbf{PMC} & \textbf{DBLP}\\
\hline
1.2 & 1 & 4 & 1 & 1\\
1.4 & 2 & 7 & 2 & 3\\
1.6 & 3 & 10 & 2 & 4\\
1.8 & 4 & 13 & 3 & 6\\
2.0 & 5 & 16 & 3 & 7\\
\hline
\end{tabular}

\end{table}

Given the future state of the citation network, we can compute the STI of each paper as per its definition (see Section~\ref{sec-back}). Similar to previous approaches in the literature~\cite{sayyadi2009futurerank,walker2007ranking,ghosh2011time,KVSDV2019}, the ranking of papers based on their 
STI forms the \emph{ground truth}. 
Any paper ranking method is oblivious of the future state $\mC^{(n+\tau)}$ of the citation network, and hence the ground truth, and only uses the current state $\mC^{(n)}$ to derive a ranking. To quantify the effectiveness of a method, we compare its produced ranking to the ground truth, using the following two measures:

\begin{itemize}

\item Spearman's $\rho $~\cite{spearman1904proof} is a
non-parametric measure of rank correlation. It is based on 
the $L1$ distance of the ranks of items in two
ranked lists and provides a quantitative measure to 
compare how similar these lists are. Its values range
from $-1$ to $1$ with $1$ denoting perfect
correlation, $-1$ denoting perfect negative
correlation and $0$ denoting no correlation.

\item Normalized Discounted Cumulative Gain at rank $k$ (nDCG@$k$) is a rank-order sensitive metric.
The discounted cumulative gain (DCG) at rank $k$ of a paper is computed as
$\text{DCG}@k = \sum_{i=1}^k{\frac{rel(i)}{\log_2(i+1)}}$, 
where $rel(i)$ is the ground truth score, i.e., 
the short-term impact, of the paper that appears at 
the $i$-th position on the method's ranking. The
nDCG@$k$ is 
the paper's DCG divided by the ideal DCG, achieved
when the method's ranking coincides with the ground
truth. In our evaluation, we consider
values of $k$ 
among $\{5, 10, 50, 100, 500\}$, 
with $k=50$ being used as a default value.

\end{itemize}

Spearman's $\rho $ calculates an overall similarity of the given ranking with the ground truth ranking. In contrast, nDCG@$k$ measures the agreement of the two rankings on the top-ranking papers.

\subsection{Ranking Effectiveness}
\label{param_tuning}

\begin{table}[!t]
\caption{\method's parameterization space.}
\label{tuning_params}
\centering
\begin{tabular}{c | c c c }
\hline
\textbf{Parameter} & \textbf{min} & \textbf{max} & \textbf{step}\\
\hline
$\alpha$ & 0.0 & 0.5 & 0.1\\
$\beta$  & 0.0 & 1.0 & 0.1\\
$\gamma$ & 0.0 & 0.9 & 0.1\\
$y$      & 1   &   5 & 1  \\
\hline
\end{tabular}

\end{table}

In this section, we investigate \method's effectiveness 
for the default experimental 
setting (test ratio equal 
to $1.6$), while varying its parameters, $\alpha, \beta, \gamma$,
and the number $y$ of past years 
used to calculate the attention of a paper.
The range of values tested are shown in Table~\ref{tuning_params}.
For each metric, we discuss \method's parameterization
that achieves the best ranking effectiveness.

First, however, we 
discuss how we set the 
value of the exponential factor
$ \gamma $ of Equation~\ref{exponential-factor}.
We follow a similar approach as 
the one used in~\cite{sayyadi2009futurerank}.
For each dataset, we use an
exponential function of the form
$e^{\widetilde{\gamma}y}$, to fit the 
tail of the distribution of the random 
variable $Y$ that models the probability 
of an article being cited 
$n$ years after its publication.
Figure~\ref{fig:distro}
illustrates the empirical probability 
distribution for each dataset. The factor
$\widetilde{\gamma}$ of the fitting 
function is used as the $\gamma$ value. 
Following this procedure, we calculate
$ \gamma = -0.48$ for hep-th, 
$ \gamma = -0.12 $ for APS and 
$ \gamma = -0.16 $ for PMC and DBLP.

\subsubsection{Effectiveness in terms of Correlation}
\label{correlation-tuning}

\begin{figure}[!t]
\label{heatmaps}
\centering

\begin{subfigure}{0.5\textwidth}
  \includegraphics[width=\textwidth]{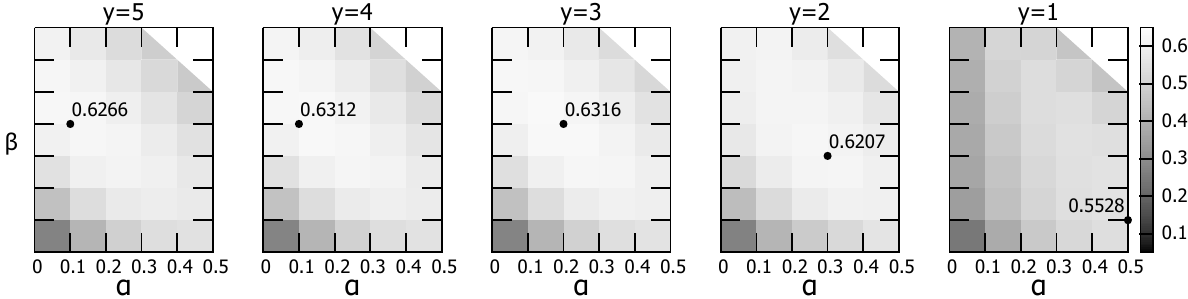}
  \caption{Correlation on the DBLP dataset.
  }
  \label{dblp_correlation_heatmap}
\end{subfigure}

\begin{subfigure}{0.5\textwidth}
  \includegraphics[width=\textwidth]{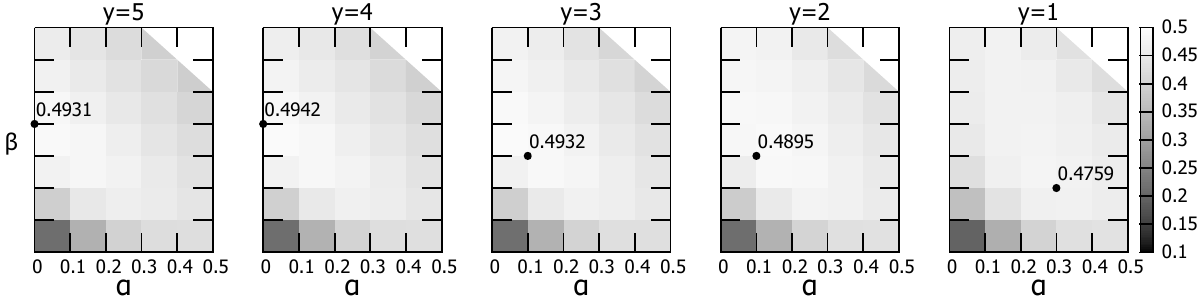}
  \caption{Correlation on the PMC dataset.
  }
  \label{pmc_correlation_heatmap}
\end{subfigure}

\begin{subfigure}{0.5\textwidth}
  \includegraphics[width=\textwidth]{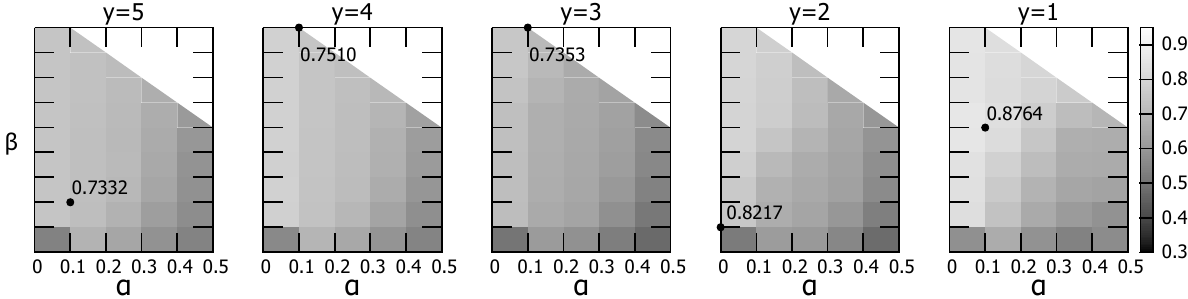}
  \caption{nDCG@50 on the DBLP dataset.
  }
  \label{dblp_ndcg_heatmap}  
\end{subfigure}

\begin{subfigure}{0.5\textwidth}
  \includegraphics[width=\textwidth]{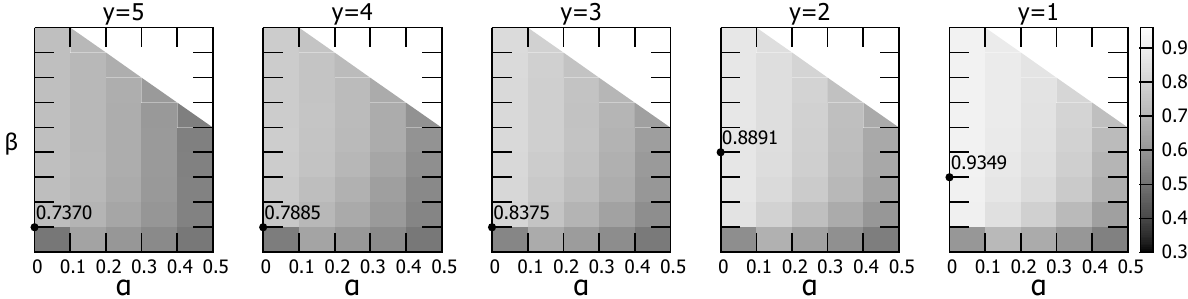}
  \caption{nDCG@50 on the PMC dataset.}
  \label{pmc_ndcg_heatmap}
\end{subfigure}

\caption{Heatmaps depicting the effect of the parameterization of \method to its effectiveness in terms of the correlation and nDCG@50 metrics, for the DBLP and PMC datasets. The best value achieved for each metric is also depicted.
}
\end{figure}

In this experiment, we measure the ranking 
effectiveness of \method, in terms
of Spearman's $\rho$ to the ground truth ranking by STI. 
We can visualize the effectiveness of each tested parameterization as a heatmap over the $\alpha$--$\beta$ space for different values of $y$.
Indicatively, we show the 
heatmaps, for the various parameter settings 
on DBLP and PMC in
in Figures~\ref{dblp_correlation_heatmap}
and \ref{pmc_correlation_heatmap},
respectively.
The heatmaps show
the results varying parameters $\alpha$, and
$\beta$; parameter $\gamma$ is implied since
$\alpha + \beta + \gamma = 1$. 
We expect that as $ \alpha $ increases, \method
simulates researchers that predominantly prefer 
reading papers from reference lists and rarely 
choose papers based on their age, or on whether
they have been recently popular. Thus, as 
$ \alpha $ increases, \method gradually
reduces to simple PageRank, with a small probability 
of random jumps. Since references are made 
only to papers published in the past, researchers
increasingly arrive at older papers when following
references with high probability. As a result, 
for large $\alpha$ values, \method promotes older
papers and, thus, its correlation to the ground truth
is expected to drop. 
Most importantly, the heatmaps validate the 
role of the \ourvec~scores, since for $\beta=0$ (NO-ATT)
we observe significantly lower correlations 
(notice the darker color on the bottom left 
corner of the heatmaps). Similarly, lower correlations
were observed when $\beta=1$ (ATT-ONLY).


From the produced correlation scores, we firstly gather that
\method correlates at least moderately to
the ground truth ranking 
for all datasets in its best setting 
(i.e., $\rho>0.49$).
Further, we observe that the optimal value for
the number of past years $y$, used to calculate the
\ourvec~score 
is $y=3$ and $y=4$ on DBLP and PMC,
respectively, while it's $y=3$ 
on APS, 
and $y=1$ on hep-th.

Interestingly, the first three datasets
follow relatively similar citation patterns (see
Figure~\ref{fig:distro}), with papers
having a citation peak at $2-3$ years after their
publication, while hep-th shows quicker citation 
peaks. Intuitively then, it makes sense to use
a smaller value of $y$ to calculate 
attention scores for hep-th. Since
its research trends may change faster, 
a larger time window to calculate the 
\ourvec would
reflect past research trends, and not current ones. 
On APS, PMC, and DBLP, 
in contrast, papers 
gather citations at a slower rate, thus a larger 
$y$ value is more likely to reflect current 
research preferences.

Based on these experiments, we 
also identify the optimal
parameterization that achieves
maximum correlation
per dataset. We find the best
settings of 
\{$\alpha$,$\beta$,$\gamma$,$y$\} 
at \{$0.3,0.4,0.3,1$\} for hep-th ($\rho=0.6519$),
\{$0.3,0.3,0.4,3$\} for APS ($\rho=0.6295$),
\{$0.0,0.4,0.6,4$\} for PMC ($\rho=0.494$), and
\{$0.2,0.4,0.4,3$\}  for DBLP ($\rho=0.6316$). 
To illustrate the significance
of the attention mechanism, compare these results
to the maximum values 
for $\beta=0$ (NO-ATT). These are 
$0.56$, $0.581$, $0.411$, and $0.529$ 
for hep-th, APS, PMC, and DBLP, respectively.
Accordingly, for $\beta=1$, these values are
$0.615$, $0.537$, $0.444$, $0.571$.

\subsubsection{Effectiveness in terms of nDCG@50}
\label{ndcg-tuning}


We repeat the effectiveness analysis, this time considering the nDCG@50 metric.
Indicatively, we present the heatmaps for DBLP and PMC
in Figures~\ref{dblp_ndcg_heatmap}
and~\ref{pmc_ndcg_heatmap}, respectively. 
An interesting observation is that with 
regards to only capturing the papers with the 
highest short-term impact, smaller time windows
on which the 
\ourvec~scores are calculated seem to be more
suitable. We observe 
that as $y$ increases, the overall nDCG values
decrease  (notice the 
darker hues when  $y$ increases). 
We expect that by 
further increasing $y$, the nDCG would further
drop. This is because by increasing the time
window on which we calculate the 
\ourvec~we, re-introduce the 
inherent age bias of citation networks, 
and the papers with the highest attention scores no longer 
reflect current research trends.
The same observation holds for increased
values of $\alpha$ when $y>1$. 
As $\alpha$ increases,  
the PageRank component dominates \method, giving
advantage to older papers that are not necessarily 
at the current focal point of research. This observation 
is evident from the darker hues on the heatmaps
for values of $\alpha$ close to $0.5$.

Finally, we determine the parameterization that achieves the
best nDCG@$50$ per dataset.
We find the best settings of parameters 
\{$\alpha$,$\beta$,$\gamma$,$y$\} 
at \{$0.0$,$0.4$,$0.6$,$1$\} for hep-th
($nDCG=0.8930$),
\{$0.3$,$0.5$,$0.2$,$3$\} for APS ($nDCG=0.7293$),
\{$0.0$,$0.3$,$0.7$,$1$\} for PMC ($nDCG=0.9349$)
and \{$0.1$, $0.5$, $0.4$, $1$\}  for DBLP ($nDCG=0.8764$).
As before, we observe that the
\ourvec
vector plays a non-negligible role in achieving the 
maximum nDCG on all datasets (i.e., $\beta>0$).
Indicatively, the maximum nDCG@$50$ 
values for
$\beta=0$ are $0.653$, $0.635$, $0.6$, and 
$0.683$ for hep-th, APS, PMC, and DBLP, 
respectively. Accordingly, for $\beta=1$ these
values are $0.89$, $0.71$, $0.930$, $0.862$.

\subsection{Comparative Evaluation}
\label{sec:exp:comparison}

In this section, we compare 
\method to
existing approaches for impact-based paper ranking.
Based on a recent experimental evaluation \cite{KVSDV2019}, we
select the five methods found to be most effective in
ranking by short-term impact.

\begin{itemize}
    \item \textbf{CiteRank (CR)}. This PageRank-based
    method calculates the 
    ``traffic" towards papers by researchers that
    prefer reading recently published 
    papers when performing random jumps~\cite{walker2007ranking}.
    It uses parameters $\alpha\in(0,1)$ and $\tau_{dir}\in(0,\infty)$, where $\alpha$ models 
    the probability with which
    researchers follow references from papers 
    they read and $\tau_{dir}$ models an 
    aging factor, which determines 
    the papers which
    random researchers are more likely to select
    when performing random jumps. In the original 
    work, their optimal settings are found for $\{\alpha,\tau_{dir}\}$ set to $\{0.48,1\}$,
    $\{0.5,2.6\}$,$\{0.31,1.6\}$,$\{0.55,8\}$.
    
    \item \textbf{FutureRank (FR)}. This method is
    based on PageRank and
    HITS~\cite{kleinberg1999authoritative}. 
    It applies mutual reinforcement from papers to
    authors and vice versa, while additionally using
    time-based weights to promote recently published
    papers~\cite{sayyadi2009futurerank}. 
    It uses four parameters: 
    $\alpha,\beta,\gamma\in(0,1)$, and
    $\rho\in(-\infty,0)$. 
    Parameter $\alpha$ is taken from PageRank, 
    $\beta$ is the coefficient of an author-based
    score vector, and $\gamma$ is the coefficient 
    of time-based weights. These
    weights depend on $\rho$, which modifies an
    exponentially decreasing function. In the 
    original work the optimal settings
    of $\{\alpha,\beta,\gamma,\rho\} $ are 
    $\{0.4,0.1,0.5,-0.62\}$, and $\{0.19,0.02,0.79,-0.62\}$.

    \item \textbf{Retained Adjacency Matrix (RAM).} This citation count variant uses a citation age-weighted adjacency matrix \cite{ghosh2011time}. It uses a parameter 
    $\gamma\in(0,1)$ as the base of an exponential
    function, to modify citation weights,
    based on their age. The authors find 
    $\gamma \in \{0.3,0.6,0.71\}$ as as the optimal
    settings.
    
    \item \textbf{Effective Contagion Matrix (ECM).} This method, based on Katz centrality,
    operates over 
    a citation age-weighted adjacency matrix \cite{ghosh2011time} and calculates weights
    of citation chains. It uses parameters,
    $\alpha,\gamma\in(0,1)$, where $\gamma$ is
    taken from RAM, and $\alpha$ is used
    to decrese citation chain weights
    as they increase in length. In the
    original work, the authors find the best settings
    of  $\{\alpha,\gamma\}$ to be $\{0.1,0.3\}$ or $\{0.007,0.71\}$.
    
    \item \textbf{WSDM cup's 2016 
    winner (WSDM)}. We consider the winning 
    solution \cite{feng2016efficient} of a scholarly 
    article ranking challenge.
    This method uses three bipartite networks (papers-authors,
    papers-papers, and papers-venues). It calculates paper
    scores by aggregating scores propagated to papers by other 
    papers, by their authors, and their venues, 
    additionally 
    using 
    scores based on paper in- and out-degrees.
    Paper scores are calculated iteratively,
    based on a fixed small number of
    iterations.
    The method uses 
    parameters $\alpha,\beta \in \mathbb{R}$, 
    as coefficients of each paper's
    in- and out-degree, to calculate 
    paper scores, and the number
    of iterations, $i$. 
    The authors use $\{\alpha,\beta\}=\{1.7,3\}$ in their work
    and set $i \in \{4,5\}$.
    
\end{itemize}

\begin{table}[!t]
\centering
\footnotesize
\caption{Parameterization space of competitors.}
\label{rival_params}
\begin{tabular}{c | c c c c }
\hline
\textbf{Method} & \textbf{Parameter} & \textbf{min} & \textbf{max}
& \textbf{step}\\
\hline
\multirow{ 2}{*}{\textbf{CR}} & $\alpha$ & 0.1 & 0.7 & 0.2\\
  & $\tau_{dir}$ & 2 & 10 & 2\\
\hline
\multirow{ 4}{*}{\textbf{FR}} & $\alpha$ & 0.1 & 0.5 & 0.1\\
& $\beta$ & 0.0 & 0.9 & 0.1\\
& $\gamma$ & 0.0 & 0.9 & 0.1\\
& $\rho$ & -0.82 & -0.42 & 0.2\\
\hline
\textbf{RAM} & $\gamma$ & 0.1 & 0.9 & 0.1\\
\hline
\multirow{ 2}{*}{\textbf{ECM}} & $\alpha$ & 0.1 & 0.5 & 0.1\\
\cline{2-5}
& $\gamma$ & 0.1 & 0.5 & 0.1\\
\hline
\multirow{ 3}{*}{\textbf{WSDM}} & $\alpha$ & 1.1 & 2.3 & 0.3\\
& $\beta$ & 1 & 5 & 1\\
& $i$ & 4 & 5 & 1\\
\hline
\end{tabular}
\end{table}



The optimal parameterization for the competitors
is presented in each 
work. However, these suggested values result 
from the use of 
particular datasets and specific 
experimental settings, which
differ among works. Therefore, 
in our evaluation, we extensively tuned all
competitors, to ensure a fair comparison of 
their effectiveness in ranking based on STI.
Table~\ref{rival_params} 
presents the 
examined parameter sets.\footnote{The 
settings were chosen so as to include, 
for each parameter, values close, or equal, to those suggested 
in the original works. Note, that since 
the value of one parameter may restrict the range of 
others, the total number of settings does not equal 
the sum of all individual parameter settings. Note 
also, that some works do not provide a formal proof 
of convergence. Hence, we exclude the parameter 
ranges in Table~\ref{rival_params} which resulted in non-convergence.}
In total, we used $20$ different settings for
CR, $120$ settings for FR, $9$ settings for RAM,
$25$ settings for ECM, and $50$ settings for WSDM.
Note, that since WSDM requires venue data, we ran this 
method only on the PMC and DBLP datasets, for which  this data was available.
Further, we ran all iterative methods until
the convergence 
error drops below $ 10^{-12} $, 
to ensure that all 
scores approach their final values and
further iterations are not expected to
change the ranking of papers.

In addition to these existing approaches, we also consider two variants of \method that better demonstrate the effect of the attention mechanism. The first, denoted as NO-ATT, 
removes the attention mechanism in \method, i.e., sets $\beta = 0$. Conversely, the second, denoted ATT-ONLY, considers only the attention mechanism in \method, i.e., sets $\beta = 1$.

\begin{figure}
  \centering
\includegraphics[width=0.9\linewidth]{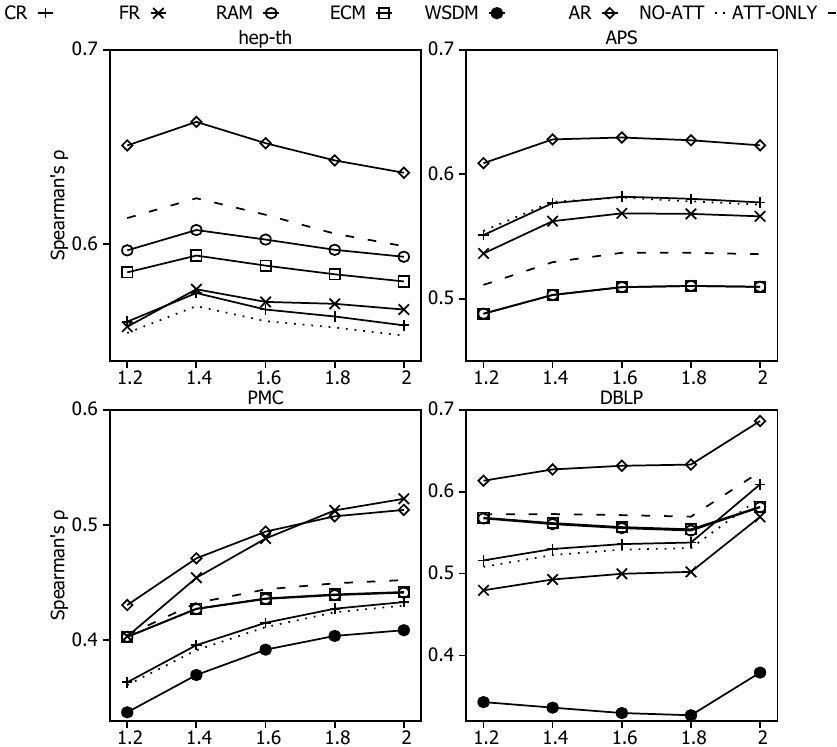}
  \caption{Effectiveness of all methods in terms of correlation.
  The x-axis represents the varying test ratio. }
  \label{comparative_correlation}
\end{figure}

\subsubsection{Effectiveness in terms of Correlation}
\label{exp:correlation}

In this experiment, we measure the correlation of 
each method's ranking to that of the ground truth.
We vary the test ratio of the size of networks 
according to 
Section~\ref{sec:exp:methodology}. For each dataset and test ratio,
we choose the parameterization with the 
best correlation. 
Figure~\ref{comparative_correlation} presents the results.

We observe that \method's ranking better correlates
to the ground truth ranking, compared to all 
competitors on all settings for the hep-th, APS, 
and DBLP datasets. In particular, \method
increases correlation by
up to $0.055$ units on hep-th, by up to $0.057$ on APS, and by
up to $0.079$ on DBLP 
with respect to the best competitor. 
Further, on most
settings \method correlates better 
to the ground truth ranking on PMC, by up to $ 0.027 $, 
compared to the best competitor, while marginally 
losing to FR on two settings (the correlation values 
observed differ by $< 0.01$). It is worth highlighting
that FR achieves such a good correlation only for PMC;
in the other datasets, it 
is outperformed by other existing methods. 
In contrast, \method is robust across
datasets and settings with a large correlation gain 
over all competitors (except FR in PMC).

Our method's performance can be attributed to the 
fact that, compared to the time-aware competitors,
it does not simply promote papers recently cited, 
or recently published. Instead, because of the attention
mechanism, it heavily promotes
 well-cited, recent papers, compared 
to lesser cited recent papers. 
As discussed in 
Section~\ref{sec-method}, recently popular papers
indeed remain popular.
Moreover, our method promotes older papers that are still heavily cited. 
The importance of the attention mechanism is illustrated
by the fact that in two datasets ATT-ONLY outperforms
existing methods. Turning off attention completely, i.e, 
the NO-ATT method,
results in subpar performance, except in one dataset.
Most importantly, in all cases,
the effectiveness is increased when 
the attention mechanism is balanced with the other mechanisms
in \method.


\begin{figure}[!t]
  \centering
 \includegraphics[width=0.9\linewidth]{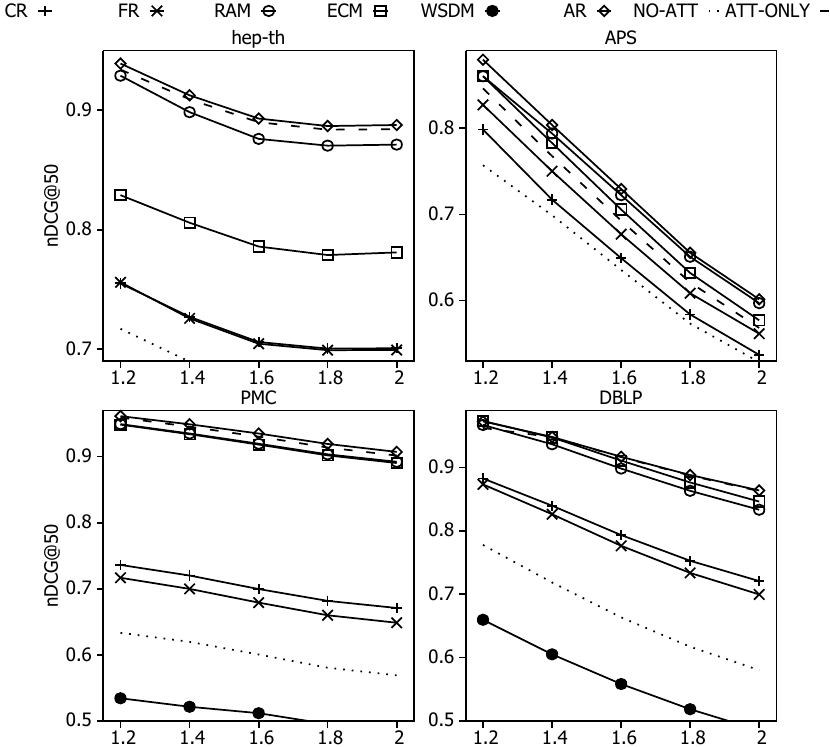}
  \caption{Effectiveness of all methods in terms of nDCG@50. 
  The x-axis represents
  the varying test ratio. }
  \label{comparative_ndcg_varying_fcc}
\end{figure}

\subsubsection{Effectiveness in terms of
nDCG}


In this section, we measure the nDCG 
achieved by each method with regards 
to the ground truth.
We conduct two experiments: in the first,
we set $k=50$ as the cut-off rank when
computing nDCG, varying the test ratio.
In the second experiment we use the
default test ratio (at $1.6$) and measure
nDCG varying $k$.

Figure~\ref{comparative_ndcg_varying_fcc}
presents the results varying the test ratio.
For each setting, 
we select the parameterization of
each method that gives the best nDCG@$50$
value.
In general, as we look further into the
future, i.e., increase the test ratio, 
the ranking accuracy of all methods drops;
the effect is more pronounced in the APS
dataset, and less in hep-th. In all cases,
\method outperforms all competitors,
except for one setting on
DBLP (for $n=1.2$, it marginally 
loses to ECM, with a difference of $<0.001$). 
In  particular, our method improves nDCG@$50$ 
by up to $ 0.017 $ units on hepth, 
$ 0.018 $ on APS, 
$0.0156$ on PMC, and  $ 0.0173 $ on DBLP,
compared to the best existing method.
It is worth mentioning that the best 
existing method varies across datasets, 
being either RAM or ECM.

Figure~\ref{comparative_ndcg_varying_k_best}
presents the results varying $k$ for
a test ratio of $1.6$. For each setting, 
we select the parameterization of each
method that gives the best nDCG@$k$ value.
In general we observe that 
\method is at least on par, and
mostly outperforms all rivals on 
all datasets, with the sole exception 
of nDCG@$5$ on APS (the measured 
difference compared to 
the best competitor is $0.015$). 
Specifically, \method achieves a higher
nDCG value of up to $0.017$ units on hep-th,
up to $0.013$ units on APS
(except nDCG@$5$), up to $0.0019$ on
PMC, and up to $0.01$ units on DBLP.
Additionally, for small values of $k$ 
($k=\{5,10\}$) \method achieves nDCG values
close to $1$ on two out of four datasets
(hep-th and PMC). 
The best competitors 
are again RAM and ECM, depending on the
dataset.

Regarding the special cases of \method, we observe in both 
Figures~\ref{comparative_ndcg_varying_fcc}
and~\ref{comparative_ndcg_varying_k_best},
that excluding attention (NO-ATT) results 
in a significant drop in nDCG. On the
other hand, attention alone (ATT-ONLY)
outperforms most existing methods except 
in APS. As also observed in the case of
Figure~\ref{comparative_correlation},
carefully balancing the mechanisms in 
\method leads to a considerable 
improvement in ranking accuracy.


\begin{figure}[!t]
  \centering
  \includegraphics[width=0.9\linewidth]{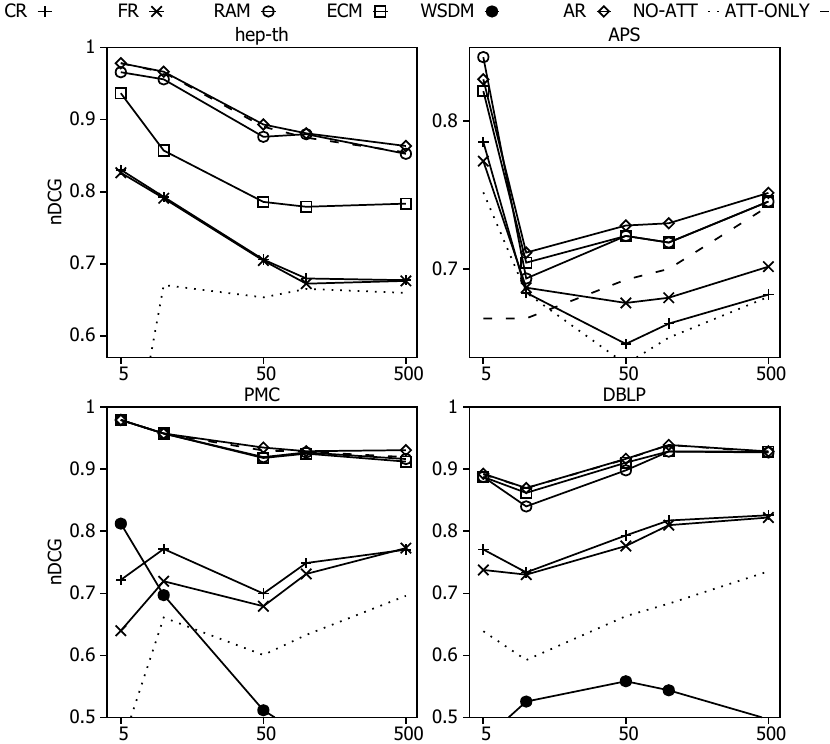}
  \caption{Effectiveness of all methods in terms of nDCG@$k$ on the default test
  ratio. The x-axis represents
  the varying value of $k$.
  }
  \label{comparative_ndcg_varying_k_best}
\end{figure}

\subsection{Convergence of AttRank}
\label{sec:exp:converge}

\method involves an iterative process, similar to PageRank, to compute scores for papers. 
Specifically, we can view \method as a PageRank variant,
where the random jump vector is replaced by two
vectors, the attention-based vector and
the publication age-based vector, and thus PageRank's random
jump probability $ 1-\alpha $ is divided among $\beta,\gamma$ in \method.
The convergence of \method is thus affected by the same factors as PageRank's; an in-depth discussion of
PageRank's convergence properties can be found 
in~\cite{langville2011google}.
The most important property is that as $ \alpha \to 1$, the
convergence rate decreases and more iterations are required.

Following the discussion in Section~\ref{sec-exp}, however,
large values of parameter $ \alpha $ do not favor ranking
based on short-term impact, and 
\method's optimal effectiveness is 
always achieved for $ \alpha \leq 0.5 $.
Additionally, as $ \alpha \to 0$, \method
tends to 
depend increasingly on 
the sum of the attention- and time-based 
vectors. Thus, the number of iterations 
required for convergence decreases, with the limit case $ \alpha = 0$ requiring 
a single iteration (i.e.,
the calculation of the attention- and time-based vectors).

Overall, \method is expected to converge faster than 
Page\-Rank and other variants (PageRank has been used with $ \alpha = 0.5$ on 
citation networks~\cite{chen2007finding,ma2008bringing}).
In our experiments, \method converges in less than $30$ iterations
for hep-th, APS, and DBLP, and less than $20$ iterations for PMC,
for $ \alpha = 0.5 $ and a convergence error 
of $ \epsilon \leq 10^{-12}$, with the number of iterations
decreasing for smaller values of $\alpha$. Compare this to
the maximum required iterations for CR, 
which are $51$, $46$, $26$, and $47$, for hep-th, APS, PMC,
and DBLP, respectively, for $ \alpha = 0.5 $. 
The corresponding numbers for FR (which 
did not, in practice, converge under all
possible settings) are 
$ 35 $, $ 30 $, $26$, and $ 23 $, for hep-th, APS, PMC,
and DBLP, respectively, and for $ \alpha = 0.5 $.

\section{Related work}
\label{sec-related}

In recent years, various methods have been proposed for quantifying the scientific impact of papers. In the following, we review the most important work, focusing on methods to rank papers by their expected short-term impact.
For a thorough coverage of this research area refer to \cite{BaiLZNKLX17,KVSDV2019}.

\stitle{Basic Centrality Variants.}
A large number of methods are Page\-Rank adaptations tailored to better 
simulate the way a random researcher traverses the citation 
network while reading papers (e.g.,~\cite{yao2014ranking,zhou2016ranking,sidiropoulos2006generalized,chen2007finding}).
While such approaches modify the random
researcher's behaviour in intuitive ways (e.g., she 
prefers reading cited papers that are similar to the one 
she currently reads), they do not address 
age bias, an important intrinsic issue in citation networks.

\stitle{Time-Aware Methods.}
To alleviate age bias, a number of time-aware methods were
proposed. These methods introduce time-based weights in 
the various centrality metric calculations, to 
favor either recent 
publications (e.g.,~\cite{yu2005adding,dunaiski2012comparing,walker2007ranking,sayyadi2009futurerank}
or recent citations (e.g.,~\cite{ghosh2011time}), or citations received 
shortly after the publication of an article (e.g.,~\cite{yan2010weighted,zhang2018ranking}).\footnote{
Note that time-aware weights with different interpretations have been proposed in a limited number of works, in particular in~\cite{hsu2016timeaware,ma2018query}.
}

Although the aforementioned practice has been applied to citation count 
variants~\cite{yan2010weighted,zhang2018ranking,ghosh2011time} or to 
Katz centrality~\cite{ghosh2011time}, most works introduce
time-awareness to PageRank adaptations. This 
is achieved by
modifying either the adjacency
matrix~\cite{yu2005adding,dunaiski2012comparing}
and/or landing probabilities in the Page\-Rank formula
(e.g.,~\cite{walker2007ranking,sayyadi2009futurerank,hwang2010yet,dunaiski2012comparing}). 
In the former case, the intuition is that the 
random researcher avoids following references to old papers (with respect to the current time or to the publication year of the citing paper). In the latter case, the random researcher 
prefers selecting new papers during random jumps.

Time-awareness is shown to improve the accuracy when
ranking by short-term impact. However, it fails to
differentiate among recent papers favoring all
equally. In reality, some papers are fitter than
others and will attract more attention. To address
this issue, literature proposes using additional
information besides the citation network, such as
paper metadata and other networks.

\stitle{Metadata.}
An interesting approach is to incorporate paper metadata (e.g., information about authors, venues) into the ranking method. 
Scores based on these metadata can be derived either
through simple statistics calculated on paper scores
(e.g., average paper scores for authors or venues),
or from well-established measures such as the
Journal Impact Factor~\cite{garfield2006history},
or the Eigenfactor~\cite{bergstrom2008eigenfactor}.
The majority of approaches in this category
incorporates paper metadata
in Page\-Rank-like models, to modify citation/transition
matrices (e.g.,~\cite{yan2010weighted}), or both citation/transition matrices and random jump probabilities~\cite{hwang2010yet,hsu2016timeaware}. 
An alternative to the above approaches is presented 
in~\cite{yu2005adding}
which calculates the scores of recent papers, for
which only limited citation information is currently
available, solely based on metadata, while using a
time-aware PageRank model for the rest. 

\stitle{Multiple Networks.}
Another way to incorporate additional information is to
define iterative processes on 
multiple interconnected networks (e.g., author-paper, 
venue-paper networks) in addition to the basic citation 
network. 
We can broadly discern two approaches: the first 
is based on mutual reinforcement, an idea 
originating from
HITS~\cite{kleinberg1999authoritative}. 
Methods following this approach (e.g.,~\cite{sayyadi2009futurerank,wang2013ranking})
perform calculations on
bipartite graphs where nodes on either side of the 
graph mutually reinforce each other (e.g., paper scores
are used to calculate author scores and vice versa), in
addition to calculations on homogeneous networks (e.g. 
paper-paper, author-author). In the second approach, 
a single graph spanning heterogeneous nodes is used for all 
calculations~\cite{nie2005object,jiang2012towards} 
and scores are propagated between all types of nodes during 
an iterative process.

\stitle{Ensemble Techniques.}
A popular approach for improving ranking accuracy is to consider ensembles that combine the rankings from multiple methods.
The majority of the 2016 WSDM 
Cup\footnote{The task was to rank papers based on their ``query-independent importance'' using information from multiple interconnected networks~\cite{wade2016wsdm}.}
paper ranking methods (e.g.~\cite{feng2016efficient,chang2016ensembleofranking,wesley2016static})
and their extensions (like~\cite{ma2018query}) fall in this category. They combine several types of
scores like in- and out-degrees, simple and time-aware PageRank scores, metadata-based scores etc., calculated on different graphs (citation network, co-authorship network, etc). 
For instance, 
the winning solution of the cup~\cite{feng2016efficient} 
(see Section~\ref{sec-exp}), 
combines various scores derived from in- and out-degrees
with scores propagated from venues, papers, and authors.

\stitle{Paper Citation Prediction.}
A separate line of work is concerned with modeling the arrivals of citations for \emph{individual} papers 
to predict their \emph{long-term} impact. Early approaches \cite{yan2011citation,DavletovAC14} model the problem as a time series prediction task. Following the seminal work of \cite{wang2013quantifying}, 
subsequent works model the arrival of citations using non-homogeneous Poisson \cite{shen2014modeling} or Hawkes \cite{XiaoYLJWYCZ16} processes.

This line of work is ill-suited for ranking by short-term impact for two reasons. First, it has a different goal, predicting the citation trajectory of individual papers, and as such it optimizes for the prediction error with respect to the actual citation trajectories. Second, the training process is prone to overfitting \cite{wang2014comment}, and requires a long history ($\geq 5$ years) of observed citations for each paper. In constrast, the 
majority of the top ranking papers by short-term impact are recent publications. 
For example, in the default experimental configuration of the PMC dataset (see Section~\ref{sec:exp:methodology}) $79\%$ of the top-$100$ papers are published in the last $5$ years.

\stitle{Discussion.}
The time-awareness mechanism is not sufficient for distinguishing the short-term impact of papers. As explained, recent work focuses on using additional data sources  (venues, co-authorship networks, etc.) to build better informed models. However, an important limitation of this strategy is that this data is not readily available, fragmented in different datasets, not easy to collect, integrate and clean, and is often incomplete. In contrast, our approach is to rely solely on the properties of the underlying citation network, and try to better model the process with which the network evolves.

\section{Conclusion}
\label{sec-concl}

In this work, we present 
\method, a method that effectively ranks papers 
based on their expected short-term impact.
The key idea is to carefully utilize the recent attention a paper 
has received.
Specifically, our method models the process of a random researcher reading papers from the literature, and incorporates an 
\ourvec mechanism to identify popular papers that are likely to 
continue receiving citations, as well as a time-based mechanism 
to promote recently published papers that have not yet 
received sufficient citations.

We studied the effectiveness of our approach in terms of Spearman's rank
correlation and nDCG compared to the ground 
truth rankings compiled from the short-term impact of papers across four different citation networks.
Our findings demonstrate that our
method outperforms existing methods in terms of both
metrics.
Moreover, they validate the introduction of the attention-based
mechanism. The effectiveness of our approach degrades when the attention-based mechanism is completely removed, or when used in isolation.


\bibliographystyle{IEEEtran}

\bibliography{bibfile.bib}

\end{document}